\setlist[enumerate]{nosep,topsep=0.3em}
\setlist[enumerate,1]{label=(\roman*)}
\setlist[enumerate,2]{label=(\alph*)}
\setlist[itemize]{nosep,topsep=0.1em}
\newcolumntype{M}[1]{>{\centering\arraybackslash}m{#1}}
\newcolumntype{L}[1]
{>{\raggedright\let\newline\\\arraybackslash\hspace{0pt}}p{#1}}
\newcolumntype{C}[1]
{>{\centering\let\newline\\\arraybackslash\hspace{0pt}}p{#1}}
\newcolumntype{R}[1]
{>{\raggedleft\let\newline\\\arraybackslash\hspace{0pt}}p{#1}}
\newcolumntype{D}[1]
{>{\raggedright\let\newline\\\arraybackslash\hspace{0pt}}m{#1}}
\newcolumntype{E}[1]
{>{\centering\let\newline\\\arraybackslash\hspace{0pt}}m{#1}}
\newcolumntype{F}[1]
{>{\raggedleft\let\newline\\\arraybackslash\hspace{0pt}}m{#1}}
\definecolor{darkblue}{rgb}{0,0,0.38}
\definecolor{darkred}{rgb}{0.6,0,0}
\definecolor{darkgreen}{rgb}{0.1,0.35,0}
\newcommand{\labeltarget}[1]{\Hy@raisedlink{\hypertarget{#1}{}}}
\newtheorem{theorem}{Theorem}
\newtheorem{lemma}[theorem]{Lemma}
\newtheorem{definition}[theorem]{Definition}
\newtheorem{corollary}[theorem]{Corollary}
\newtheorem{observation}[theorem]{Observation}
\newlength\bxheight
\newcommand\TJdomRaw[1]{\ensuremath{P_{#1\textrm{-join}}^\uparrow}}
\newcommand\TJdom[1]{\raisebox{0pt}[\bxheight]{\TJdomRaw{#1}}}
\newcommand\HK{\mathrm{HK}}
\newcommand\PHK{P_{\mathrm{HK}}}
\newcommand\PST{P_{\mathrm{ST}}}
\newcommand\OPT{\mathrm{OPT}}
\DeclareMathOperator{\supp}{supp}
\DeclareMathOperator{\odd}{odd}
\newcommand{\symdiff}{\bigtriangleup}
\title{A $1.5$-Approximation for Path TSP}
\author{
Rico Zenklusen\thanks{
Department of Mathematics, ETH Zurich, Zurich, Switzerland.
Email: \href{mailto:ricoz@math.ethz.ch}
{ricoz@math.ethz.ch}.
Supported by the Swiss National Science Foundation grant
200021\_165866.
}
}
\date{}
\begin{document}

\maketitle

\begin{abstract}
We present a $1.5$-approximation for the Metric Path Traveling Salesman Problem (Path TSP). All recent improvements on Path TSP crucially exploit a structural property shown by An, Kleinberg, and Shmoys [Journal of the ACM, 2015], namely that narrow cuts with respect to a Held-Karp solution form a chain. We significantly deviate from these approaches by showing the benefit of dealing with larger $s$-$t$ cuts, even though they are much less structured. More precisely, we show that a variation of the dynamic programming idea recently introduced by Traub and Vygen [SODA, 2018] is versatile enough to deal with larger size cuts, by exploiting a seminal result of Karger on the number of near-minimum cuts. This avoids a recursive application of dynamic programming as used by Traub and Vygen, and leads to a considerably simpler algorithm avoiding an additional error term in the approximation guarantee. We match the still unbeaten $1.5$-approximation guarantee of Christofides' algorithm for TSP. Hence, any further progress on the approximability of Path TSP will also lead to an improvement for TSP.
 \end{abstract}

\section{Introduction}

The Metric Traveling Salesman Problem (TSP) together with many of its natural variants, like the Metric Path Traveling Salesman Problem (Path TSP), are fundamental and classical problems in Combinatorial Optimization with an enormous influence on the field. For TSP, we are given a complete undirected graph $G=(V,E)$ with nonnegative edge lengths $\ell\colon E\longrightarrow \mathbb{R}_{\geq 0}$ satisfying the triangle inequality, and the task is to find a shortest Hamiltonian tour. For Path TSP, the task is to find a shortest path between two given distinct vertices $s$ and $t$ that visits every vertex exactly once. Both problems being APX-hard, there has been extensive research on the development of approximation algorithms, which proved to be a highly nontrivial task. For TSP, the more than four decades old $1.5$-approximation of Christofides~\cite{christofides_1976_worst-case} remains unbeaten. However, for special cases of TSP, in particular for so-called \emph{graph metrics}, where the lengths $\ell$ correspond to shortest distances in a unit-length graph on the vertex set $V$, exciting progress has been achieved recently (see~\cite{oveisgharan_2011_randomized,momke_2016_removing,mucha_2014_approximation,sebo_2014_shorter} and references therein).
Interestingly, for Path TSP, the situation was somewhat similar until a few years ago. Beginning of the 90's, Hoogeveen~\cite{hoogeveen_1991_analysis} showed that a natural variant of Christofides' algorithm yields an approximation ratio of $\sfrac{5}{3}$. No improvement was found for over twenty years, until An, Kleinberg, and Shmoys~\cite{an_2012_improving,an_2015_improving} presented an elegant $\sfrac{(1+\sqrt{5})}{2}\approx 1.618$-approximation, by exploiting a new structural insight on so-called narrow cuts, namely that they form a chain.\footnote{A set family is a \emph{chain} if for any two sets $A,B$ in the family, either $A\subseteq B$ or $B\subseteq A$ holds.}
This was the beginning of a series of exciting improvements on the approximation factor for Path TSP (see Table~\ref{tab:prevApprox}), and which all build upon the crucial chain structure of narrow cuts. These developments culminated in a recent breakthrough result by Traub and Vygen~\cite{traub_2018_approaching}, who obtained, for any fixed $\epsilon >0$, a $(1.5+\epsilon)$-approximation through a recursive dynamic program.

\begin{table}[h]
\begin{center}
\rowcolors{2}{black!0}{black!5}
\begin{tabular}{L{5cm}l}
\multicolumn{1}{l}{Reference} & \multicolumn{1}{l}{Ratio}\\ \toprule
Hoogeveen~\cite{hoogeveen_1991_analysis} & $1.667$ \\
An, Kleinberg, and Shmoys~\cite{an_2015_improving} & $1.618$ \\
Seb\H{o}~\cite{sebo_2013_eight-fifth} & $1.6$ \\
Vygen~\cite{vygen_2016_reassembling} & $1.599$ \\
Gottschalk and Vygen~\cite{gottschalk_2016_better} & $1.566$ \\
Seb\H{o} and van Zuylen~\cite{sebo_2016_salesman} & $1.529$ \\
Traub and Vygen~\cite{traub_2018_approaching} & $1.5 + \epsilon$ \\
\end{tabular}
\end{center}
\caption{Previous approximation guarantees (rounded) for Path TSP.
\protect\footnotemark
}
\label{tab:prevApprox}
\end{table}

\footnotetext{Except for the last result listed in the table, all other results also imply an equivalent upper bound on the integrality gap of the Held-Karp relaxation, which is known to be at least $1.5$ (see, e.g.,~\cite{an_2015_improving}).}

On a high level, all of these results build upon the same original approach of Christofides for TSP, which is based on first obtaining a connected graph---through a spanning tree---and then performing parity correction of the degrees by adding additional edges. More precisely, for TSP, Christofides' algorithm starts with a minimum length spanning tree $T\subseteq E$, and then computes a shortest $\odd(T)$-join $J\subseteq E$, where $\odd(T)\subseteq V$ are all odd degree vertices with respect to $T$, and for $Q\subseteq V$ with $|Q|$ even, a $Q$-join is an edge set with odd degrees precisely at $Q$.
Thus, the multiset obtained by combining $J$ and $T$ has even degrees everywhere. It can thus be interpreted as a trail, and, by skipping vertices that have been visited already when going through the trail, it can be shortcut to a Hamiltonian tour. Due to metric lengths, the shortcutting cannot increase the length, thus leading to a Hamiltonian tour of length at most $\ell(T) + \ell(J)$, where $\ell(F)\coloneqq \sum_{e\in F}\ell(e)$ for any edge set $F\subseteq E$.
The fact that this leads to a $1.5$-approximation for TSP follows by deriving that $\ell(T)\leq \OPT$ and $\ell(J) \leq \sfrac{\OPT}{2}$, where $\OPT$ is the length of a shortest TSP solution.
The analogous procedure for Path TSP, i.e., first finding a minimum length spanning tree $T$ and then doing parity correction through a minimum length $(\odd(T)\symdiff \{s,t\}$)-join,\footnote{We use $A \symdiff B\coloneqq (A\setminus B) \cup (B\setminus A)$ for the \emph{symmetric difference} of two sets $A$ and $B$.} was analyzed by Hoogeveen~\cite{hoogeveen_1991_analysis}, who showed that it leads to a $\sfrac{5}{3}$-approximation (see Table~\ref{tab:prevApprox}). This is asymptotically tight in the sense hat there are instances where the approximation factor of the algorithm is arbitrarily close to $\sfrac{5}{3}$.

The reason why Christofides' algorithm is only a $\sfrac{5}{3}$-approximation for Path TSP, whereas it is a $\sfrac{3}{2}$-approximation for TSP, is at the heart of the recent improvements on Path TSP.
To better understand this discrepancy, and also to introduce our approach later on, it is helpful to analyze the performance of Christofides' algorithm for TSP and Path TSP, respectively, in terms of the well-known Held-Karp relaxation:
\begin{equation}\label{eq:HKforTSP}\tag{Held-Karp relaxation for TSP}
\renewcommand\arraystretch{1.2}
\begin{array}{r@{\;}c@{\;}ll}
\multicolumn{1}{c}{\min \ell(x)} & & & \\
     x(\delta(C)) &\geq & 2 & \forall\; C\subsetneq V,\; C\neq \emptyset\\
     x(\delta(v)) &= & 2 & \forall\; v\in V\\
     x &\in &\mathbb{R}^E_{\geq 0}\enspace,&
\end{array}
\end{equation}
where $\ell(x)\coloneqq \sum_{e\in E}x(e)\ell(e)$, the set $\delta(C)\subseteq E$ are all edges with precisely one endpoint in $C$ (and $\delta(v) \coloneqq \delta(\{v\})$), and $x(U)\coloneqq \sum_{e\in U}x(e)$ for any $U\subseteq E$.
Any feasible point $y$ to the Held-Karp relaxation for TSP can be shown to satisfy
\begin{enumerate*}
\item the scaled-down point $\frac{n-1}{n}\cdot y$ is in the spanning tree polytope of $G$, where $n\coloneqq |V|$, and
\item $\sfrac{y}{2}$ is in the dominant of the $Q$-join polytope of $G$ for any $Q\subseteq V$ of even cardinality, which can be described as follows (see~\cite{edmonds_1973_matching}):
\end{enumerate*}
\begin{equation}\label{eq:domQJoin}
\TJdom{Q}\coloneqq \left\{
x\in \mathbb{R}^E_{\geq 0} \;\middle\vert\; x(\delta(C)) \geq 1 \quad 
\text{for all $Q$-cuts}
\;\; C\subseteq V
\right\}\enspace,
\end{equation}
where a \emph{$Q$-cut} is a set $C\subseteq V$ with $|C\cap Q|$ odd.\footnote{The dominant $\TJdom{Q}$ of the $Q$-join polytope are all points $x\in \mathbb{R}^E$ such that there exists a convex combination $y=\sum_{i=1}^k \lambda_i \chi^{J_i}$ of characteristic vectors $\chi^{J_i}\in \{0,1\}^E$ of $Q$-joins $J_i\subseteq E$ for $i\in \{1,\ldots, k\}$ such that $y\leq x$.}
This readily allows for analyzing Christofides' algorithm in terms of the value of an optimal solution $x^*$ to the Held-Karp relaxation: The minimum length spanning tree $T$ fulfills $\ell(T) \leq \ell(x^*)$, and the shortest $\odd(T)$-join $J\subseteq E$, as computed in Christofides' algorithm, has length at most $\ell(J) \leq \sfrac{\ell(x^*)}{2}$. This analysis, which is due to Wolsey~\cite{wolsey_1980_heuristic}, shows that Christofides' algorithm returns a solution of length no more than $\sfrac{3}{2}\cdot \ell(x^*)$, which, apart from providing an alternative proof of the $\sfrac{3}{2}$-approximation guarantee, also implies an upper bound of $\sfrac{3}{2}$ on the integrality gap of the Held-Karp relaxation for TSP. To date, this remains the best known upper bound on the integrality gap, which is widely believed to be equal to a known lower bound of $\sfrac{4}{3}$.
A crucial element in the above analysis, which turns out to fail in the Path TSP case, is that for any solution $y$ to the Held-Karp relaxation, $\sfrac{y}{2} \in \TJdom{Q}$ for \emph{any} $Q\subseteq V$ of even cardinality. This readily follows from the fact that the Held-Karp relaxation requires $y$ to have a load $y(\delta(C))$ of at least $2$ on \emph{each} cut $C$, whereas $\TJdom{Q}$ only requires a load of at least $1$ on a subset of the cuts (that depends on $Q$).

However, this reasoning does not carry over to the Held-Karp relaxation for Path TSP, which is:
\begin{equation}\label{eq:HKforSTTSP}\tag{Held-Karp relaxation for Path TSP}
\renewcommand\arraystretch{1.2}
\begin{array}{r@{\;}c@{\;}ll}
\multicolumn{1}{c}{\min \ell(x)} & & & \\
     x(\delta(C)) &\geq & 2 & \forall\; C\subsetneq V,\; C\neq \emptyset,\; |C\cap \{s,t\} | \in \{0,2\}\\
     x(\delta(C)) &\geq & 1 & \forall\; C\subseteq V,\; |C\cap \{s,t\} | = 1\\
     x(\delta(v)) &= & 2 & \forall\; v\in V\setminus \{s,t\}\\
     x(\delta(s)) = x(\delta(t)) &= & 1 & \\
     x &\in &\mathbb{R}^E_{\geq 0}\enspace.&
\end{array}
\smallskip
\end{equation}
One can easily show that any feasible solution to the Held-Karp relaxation for Path TSP is in the spanning tree polytope of $G$, which guarantees that the length of a shortest spanning tree $T$ is at most the optimal value of the relaxation. However, a solution $y$ to the above linear program only needs to have a load of at least $1$ on $s$-$t$ cuts, and hence, $\sfrac{y}{2}$ may violate some constraints of $\TJdom{(\odd(T)\symdiff \{s,t\})}$ corresponding to $s$-$t$ cuts $C$.

\subsection{Further discussion on prior results and motivation}

The above explanations also allow for providing some additional context regarding the prior results listed in Table~\ref{tab:prevApprox}. In particular, An, Kleinberg, and Shmoys~\cite{an_2015_improving} showed that, for any solution $y$ to the Held-Karp relaxation for Path TSP, the narrow $s$-$t$ cuts---which are the ones on which $y$ has a load of strictly less than $2$---form a chain. Together with a novel idea in this context of using a distribution over trees in the first step of Christofides' algorithm, obtained through a convex decomposition of $x^*$ into spanning trees, they showed that parity correction can be achieved at an average cost of at most $\sfrac{(\sqrt{5}-1)\ell(x^*)}{2}$.
Algorithms of this type, where a distribution of trees is used in the first step instead of a single minimum length spanning tree, were later dubbed \emph{Best-of-Many Christofides}, because they can be derandomized by taking the best tree from the distribution.
Following~\cite{an_2015_improving}, better ways to bound the cost of parity correction were derived through a variety of new techniques, all exploiting the chain structure of narrow cuts of an optimal solution to the Held-Karp relaxation.
In particular, Seb\H{o}~\cite{sebo_2013_eight-fifth} presented an elegant stronger analysis of the Best-of-Many Christofides' algorithm suggested in~\cite{an_2015_improving}. Vygen~\cite{vygen_2016_reassembling} showed that further improvement is possible by first reassembling the trees appearing in a convex decomposition of $x^*$ before sampling, to obtain desirable properties for cheaper parity correction. An additional strengthening was achieved by Gottschalk and Vygen~\cite{gottschalk_2016_better}, who derived a beautiful structural result showing that a very well-structured convex decomposition into trees is possible by generalizing a result of Gao~\cite{gao_2015_metric}.\footnote{We point the interested reader to~\cite{schalekamp_2018_layers} for an alternative proof based on algorithmic matroid theory of the main structural result shown in~\cite{gottschalk_2016_better}.}
Moreover, Seb\H{o} and van Zuylen~\cite{sebo_2016_salesman} modified the general connectivity plus parity correction approach by first deleting some edges from the initial spanning tree, thus getting a disconnected graph, and, after parity correction, reconnecting the different connected components if the graph did not get reconnected through parity correction. They obtained a $1.529$-approximation through this method, thus significantly narrowing the gap to the $1.5$-approximation for TSP. 

Most recently, Traub and Vygen~\cite{traub_2018_approaching} almost closed the gap by presenting a $(1.5+\epsilon)$-approximation, for any fixed $\epsilon>0$, by introducing a powerful technique based on constructing the initial tree, and a corresponding Held-Karp solution to bound the cost of parity correction, through a recursive dynamic program. 
At each level of their recursion, they construct a new Held-Karp solution that is good on narrow cuts of the current solution, and will be mixed into the current solution through an appropriately chosen convex combination. The key barrier this approach faces to obtain a $1.5$-approximation is that whenever a new Held-Karp solution gets combined with a current one, then the new one has its own narrow cuts, leading to new narrow cuts when mixing this solution with the current one. This seems to be an inherent limit of this approach to reach an approximation ratio of $1.5$.

An approximation ratio of $1.5$ can be considered as a natural and desirable goal to achieve with a connectivity plus parity correction approach. Indeed, this matches the classical $1.5$-approximation of Christofides for TSP, which is also based on such an approach.
However, to go below $1.5$, this approach for Path TSP will face the same barriers as the TSP problem, because any improvement on $1.5$ for Path TSP would also improve on the more than four decades old $1.5$-approximation of Christofides for TSP.

\subsection{Our results}

Our main result is to close the gap between the best approximation factors for Path TSP and TSP, through a new and simple technique inspired by Traub and Vygen~\cite{traub_2018_approaching}, leveraging their dynamic programming approach in a new way.
\begin{theorem}\label{thm:main}
There is an efficient $1.5$-approximation for Path TSP.
\end{theorem}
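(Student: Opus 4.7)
The plan is to follow the Best-of-Many Christofides framework starting from an optimal Held-Karp solution $x^*$, but to replace the passive analysis of parity correction by an \emph{active} procedure that hand-picks a spanning tree tailored to avoid the creation of expensive $Q$-cuts. Recall from the discussion preceding the theorem that the only obstruction to the Wolsey-style analysis is the collection of $s$-$t$ cuts $C$ with $x^*(\delta(C)) < 2$, on which $\sfrac{x^*}{2}$ may fail to satisfy the corresponding constraint of $\TJdom{(\odd(T) \symdiff \{s,t\})}$. A short parity calculation (via $\sum_{v \in C}\deg_T(v) \equiv |T \cap \delta(C)| \pmod 2$) shows that such a cut $C$ is a $Q$-cut for $Q \coloneqq \odd(T) \symdiff \{s,t\}$ exactly when $|T \cap \delta(C)|$ is even; hence it suffices to engineer the spanning tree $T$ so that it crosses every such ``dangerous'' $s$-$t$ cut an \emph{odd} number of times.

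Let $\mathcal{L}$ denote the family of $s$-$t$ cuts $C$ with $x^*(\delta(C)) < 2$. Since the Held-Karp relaxation forces $x^*(\delta(s)) = x^*(\delta(t)) = 1$, each cut in $\mathcal{L}$ has $x^*$-weight strictly less than twice the $s$-$t$ minimum cut value of $x^*$; Karger's seminal bound on the number of near-minimum cuts then yields $|\mathcal{L}| = n^{O(1)}$, and the cuts in $\mathcal{L}$ can be enumerated in polynomial time. This is the crucial departure from the previous line of work: $\mathcal{L}$ goes well beyond the narrow cuts used by An, Kleinberg, and Shmoys and, importantly, is in general \emph{not} a chain. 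However, its polynomial size makes it amenable to dynamic programming.

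The central technical step is to adapt the Traub--Vygen dynamic program so that, instead of recursing on a chain of narrow cuts, it operates directly on the whole family $\mathcal{L}$. Concretely, starting from a convex decomposition of $x^*$ into spanning trees, I would use the DP to produce a single spanning tree $T$ with $\ell(T) \leq \ell(x^*)$ such that $|T \cap \delta(C)|$ is odd for every $C \in \mathcal{L}$. The state of the DP encodes, at each processing stage, the parities of the currently relevant cuts in $\mathcal{L}$; polynomiality of $|\mathcal{L}|$ keeps the state space polynomial. Given such a $T$, I compute a minimum length $(\odd(T) \symdiff \{s,t\})$-join $J$: for $C \notin \mathcal{L}$ we have $x^*(\delta(C)) \geq 2$, while for $C \in \mathcal{L}$ the odd-crossing property of $T$ ensures that $C$ is not a $Q$-cut and hence imposes no constraint on $J$. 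Thus $\sfrac{x^*}{2} \in \TJdom{(\odd(T) \symdiff \{s,t\})}$, giving $\ell(J) \leq \sfrac{\ell(x^*)}{2}$. Shortcutting the Eulerian $s$-$t$ trail formed by $T \cup J$ yields a Path TSP tour of length at most $\sfrac{3}{2} \cdot \ell(x^*)$, proving the theorem.

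The main obstacle, and where I expect the bulk of the work to lie, is the design and correctness analysis of the DP: unlike the narrow-cut chains exploited in all previous improvements, $\mathcal{L}$ has no chain structure, so one must exploit Karger's structural view of near-minimum cuts (for example, the tree-packing picture underlying his counting bound) to order and process the cuts so that the parity obligations can be tracked by a polynomial-size state, while simultaneously guaranteeing that the expected tree cost stays within $\ell(x^*)$. It is precisely this combinatorial richness, quantified by Karger, that substitutes for the chain structure of narrow cuts and makes a single, non-recursive DP pass possible, thereby eliminating the additional $\epsilon$ error term incurred by the recursive scheme of Traub and Vygen.
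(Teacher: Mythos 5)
Your proposal has a genuine gap at its central step, and it also misidentifies the cut family that makes the paper's argument work. First, the family $\mathcal{L}$ of $s$-$t$ cuts with $x^*(\delta(C))<2$ is precisely the set of \emph{narrow} cuts of An, Kleinberg, and Shmoys; these \emph{do} form a chain (there are at most $n-1$ of them), so no appeal to Karger is needed for them, and your claim that $\mathcal{L}$ ``goes well beyond the narrow cuts'' is false. The paper's departure is to work with the larger family $\mathcal{B}(x^*)$ of $s$-$t$ cuts with $x^*(\delta(C))<3$ --- it is \emph{this} family that is not a chain and whose polynomial size requires Karger's near-minimum-cut bound (after adding an auxiliary $\{s,t\}$ edge so that these become cuts within a factor $2$ of the global minimum).

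Second, and more seriously, your plan hinges on producing a spanning tree $T$ with $\ell(T)\leq \ell(x^*)$ that crosses every narrow cut an odd number of times, so that $\sfrac{x^*}{2}\in\TJdom{Q_T}$. If such a tree could be found efficiently, your argument would bound the solution by $\sfrac{3}{2}\cdot\ell(x^*)$ and hence prove an integrality gap of $\sfrac{3}{2}$ for the Held--Karp relaxation of Path TSP --- which is exactly the open problem that all the prior work (Best-of-Many Christofides, Gottschalk--Vygen, Seb\H{o}--van Zuylen) failed to resolve; the paper explicitly notes that its result does \emph{not} yield this gap bound. You give no mechanism for this step: a DP whose ``state encodes the parities of the currently relevant cuts'' has $2^{|\mathcal{L}|}$ states unless additional structure is exploited, and no existence argument for such a tree is known. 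The paper circumvents this entirely: it uses a DP (a shortest-path computation in an auxiliary graph whose arc lengths are values of Held--Karp-type LPs on induced subgraphs) to build a \emph{second fractional point} $y\in\PHK$ that is $\mathcal{B}(x^*)$-good, meaning every cut in $\mathcal{B}(x^*)$ either has $y$-load at least $3$ or is crossed integrally by a single $y$-edge of value $1$. Crucially, $\ell(y)\leq\OPT$ is obtained by comparison with the \emph{integral} optimum (any Hamiltonian $s$-$t$ path is $\mathcal{B}$-good), not with $\ell(x^*)$, which is why no integrality-gap bound follows. The tree $T$ is then a minimum spanning tree in $\supp(y)$, and the join is bounded against $\sfrac{z}{2}$ with $z=\sfrac{x^*}{2}+\sfrac{y}{2}$: the threshold $3$ is exactly what is needed so that every $Q_T$-cut receives load at least $2$ from the mixture (either $x^*(\delta(C))\geq 3$ and $y(\delta(C))\geq 1$, or $y(\delta(C))\geq 3$ and $x^*(\delta(C))\geq 1$), while the integral $1$-cuts of $y$ force $|T\cap\delta(C)|=1$ and are therefore not $Q_T$-cuts. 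Without this two-point mixture and the threshold-$3$ family, the approximation factor does not follow.
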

A major difference of our approach compared to prior methods is that we do not solely focus on narrow cuts, but consider cuts of load strictly less than $3$.
By moving to larger cuts, we lose the crucial chain property of narrow cuts shown in~\cite{an_2015_improving} that was exploited in all prior approaches. However, we show that dynamic programming, in combination with a seminal result by Karger~(see~\cite{karger_1993_global,karger_2000_minimum}) on the number of almost minimum cuts, is sufficiently versatile to deal with cuts of larger value, which do not exhibit a chain structure. This allows us to avoid recursive calls to the dynamic program, which was necessary in the approach of Traub and Vygen~\cite{traub_2018_approaching}. As a consequence, our approach is substantially simpler than the one in~\cite{traub_2018_approaching}, while providing a better approximation guarantee because the use of larger cuts than narrow ones allows for avoiding the introduction of new narrow cuts as we will show later.

\subsection{Organization of the paper}

In Section~\ref{sec:approach}, we introduce our approach and provide details on the involved techniques. In particular, we show that it suffices to obtain a Held-Karp solution with a well-defined set of properties to obtain a $1.5$-approximation. Section~\ref{sec:dynProg} then shows how such a solution can be found through dynamic programming. The dynamic program we use can be interpreted as a modification and considerable simplification of the one of Traub and Vygen~\cite{traub_2018_approaching}, because we do not have recursive calls. Most of the properties that we need to show for our dynamic program can readily be derived through reasonings similar (and often much simplified) to the ones in~\cite{traub_2018_approaching}. Still, for completeness, and due to the fact that we can simplify various steps, we provide full details of our dynamic program in Section~\ref{sec:dynProg}. 
 \section{Our approach}\label{sec:approach}

We show that one can obtain a $1.5$-approximation for Path TSP by following, on a high level, an analysis similar to Wolsey's classical analysis for TSP. More precisely, we find a spanning tree $T\subseteq E$ of $G=(V,E)$ and a point $z$ that is feasible for the Held-Karp relaxation for Path TSP, and which is only needed for the analysis, such that:
\begin{enumerate}
\item\label{item:lightT} $\ell(T) \leq \OPT$,
\item\label{item:lightz} $\ell(z) \leq \OPT$, and
\item\label{item:zIsJoinGood} $\sfrac{z}{2}\in \TJdom{Q_T}$, where
$Q_T \coloneqq \odd(T)\symdiff \{s,t\}$.
\end{enumerate}
A $1.5$-approximation is then obtained through parity correction of $T$, by adding a shortest $Q_T$-join $J$, and shortcutting.
Indeed, the approximation guarantee of $1.5$ readily follows by following Wolsey's analysis for Christofides' algorithm for TSP: Due to~\ref{item:zIsJoinGood}, the length of the shortest $Q_T$-join $J$ satisfies $\ell(J) \leq \sfrac{\ell(z)}{2}$, and hence, the solution obtained after shortcutting has length at most $\ell(T) + \ell(J) \leq \ell(T) + \sfrac{\ell(z)}{2} \leq \sfrac{3}{2}\cdot \OPT$, where the second inequality follows from~\ref{item:lightT} and~\ref{item:lightz}.

We note that also recent improvements for Path TSP prior to this work can be interpreted as showing a weaker version of the three properties above, where instead of the second condition, a bound of type $\ell(z)\leq (1+c) \cdot \OPT$ is shown for some $c>0$.

A canonical approach to achieve these three properties would be to choose $z$ to be an optimal Held-Karp solution, like in Wolsey's analysis for TSP. This guarantees that~\ref{item:lightz} holds, but would require us to find a tree $T$ fulfilling the first and third property, which seems to be a very difficult task.\footnote{Note that for any $z\in \PHK$, a tree $T$ fulfilling~\ref{item:lightT} and~\ref{item:zIsJoinGood} always exists, as an optimal solution $T$ to Path TSP always fulfills these two properties for any $z\in \PHK$, because, if $T$ is an $s$-$t$ path, then $Q_T=\emptyset$ and therefore $\TJdom{Q_T}=\mathbb{R}^E_{\geq 0}$.} 

Our approach starts with an optimal solution $x^*$ to the path version of the Held-Karp relaxation $\min\{\ell(x) : x\in \PHK\}$, where, for brevity, we denote by $\PHK$ the polytope of all feasible solutions to the relaxation:
\begin{equation*}
\PHK \coloneqq \left\{
x\in \mathbb{R}^E_{\geq 0} \;\middle\vert
\begin{array}{r@{\;}c@{\;}ll}
x(\delta(C)) &\geq &2  & \forall\, C\subsetneq V,\; C\neq \emptyset,\; |C\cap \{s,t\}|\in \{0,2\}\\
x(\delta(C))&\geq &1 & \forall\, C\subseteq V,\; |C\cap \{s,t\}|=1\\
x(\delta(v))&= &2 & \forall\, v\in V\setminus \{s,t\}\\
x(\delta(s)) &= &1\\
x(\delta(t)) &= &1
\end{array}\right\}.
\end{equation*}
We highlight that it is well-known that $\PHK$ is contained in the spanning tree polytope of $G$.

To obtain the three properties listed above, we will later set $z= \sfrac{x^*}{2} + \sfrac{y}{2}$ to be the midpoint between $x^*$ and another Held-Karp solution $y\in \PHK$ with $\ell(y)\leq \OPT$, constructed through a dynamic program that depends on $x^*$. Hence, $z$ being a convex combination of two points in $\PHK$ implies $z\in \PHK$. To better understand the properties we want from $y$, recall that we also need to be able to find a short spanning tree $T$ such that $\sfrac{z}{2}\in \TJdom{Q_T}$.
Now recall the description of $\TJdom{Q}$ for some set $Q\subseteq V$ as stated in~\eqref{eq:domQJoin}, which requires that the load on $Q$-odd cuts is at least $1$.
Because $z\in \PHK$, we have $\sfrac{z(\delta(C))}{2}\geq 1$ for any $C\subsetneq V,\; C\neq\emptyset$ with $|C\cap \{s,t\}|\in \{0,2\}$.
For $s$-$t$ cuts $C\subseteq V$, using $y\in \PHK$ only implies $y(\delta(C))\geq 1$. If, additionally, $x^*(\delta(C))\geq 3$, we can again conclude $\sfrac{z(\delta(C))}{2} \geq 1$.
Hence, no matter what tree $T$ we choose later, $\sfrac{z}{2}$ will not violate any of these constraints corresponding to $\TJdom{Q_T}$.
The only constraints of $\TJdom{Q_T}$ we have to take care of thus correspond to $s$-$t$ cuts of $x^*$-load strictly less than $3$, which we denote by $\mathcal{B}(x^*)$:
\begin{equation*}
\mathcal{B}(x^*) \coloneqq \left\{C\subseteq V \mid s\in C,\; t\not\in C,\; x^*(\delta(C))< 3\right\}\enspace.
\end{equation*}
Here, our approach substantially differs from previous methods, which focus on so-called \emph{narrow} cuts with respect to $x^*$, which are $s$-$t$ cuts of $x^*$-load strictly less than $2$. Focusing on narrow cuts allowed for exploiting a crucial property shown by An, Kleinberg, and Shmoys~\cite{an_2015_improving}, namely that they form a chain, and this was at the heart of previous improvements of the approximability of Path TSP.
As we show, we do not need any particular properties on the cuts in $\mathcal{B}(x^*)$, except for a polynomial bound on their number, to determine a point $y$ leading to the three properties highlighted at the beginning of this section.
More precisely, we want to find a point $y\in \PHK$ minimizing $\ell(y)$ among all $y\in \PHK$ that are \emph{$\mathcal{B}(x^*)$-good}, which we define as follows.
\begin{definition}[$\mathcal{B}$-good Held-Karp solution]
\label{def:Bgood}
Let $\mathcal{B}\subseteq \{C\subseteq V \mid s\in C,\; t\not\in C\}$ be a family of $s$-$t$ cuts in $G$. We say that a point $y\in \PHK$ is \emph{$\mathcal{B}$-good} if for every $B\in \mathcal{B}$ we have either
\begin{enumerate}
\item\label{item:BGoodLarge} $y(\delta(B))\geq 3$, or
\item\label{item:BGoodSmall} $y(\delta(B))=1$, and $y$ is integral on the edges $\delta(B)$.
\end{enumerate}
\end{definition}
Notice that condition~\ref{item:BGoodSmall} is equivalent to the property that there is one edge $f\in \delta(B)$ such that $y(f)=1$ and $y(e)=0$ for all $e\in \delta(B)\setminus \{f\}$.

A simple but crucial observation is that the set of all $\mathcal{B}$-good points in $\PHK$ for any family of $s$-$t$ cuts $\mathcal{B}$ is still a relaxation of the Path TSP problem. 
\begin{observation}\label{obs:BGoodIsRelax}
The characteristic vector $\chi^U$ of any Hamiltonian $s$-$t$ path $U\subseteq E$ is $\mathcal{B}$-good for any family $\mathcal{B}$ of $s$-$t$ cuts.
\end{observation}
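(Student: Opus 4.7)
The plan is to verify the two conditions of Definition~\ref{def:Bgood} directly from the combinatorial structure of a Hamiltonian $s$-$t$ path. There is essentially no obstacle here; the observation is meant as a sanity check that $\mathcal{B}$-goodness is a valid LP-type relaxation.

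First, I would confirm that $\chi^U\in \PHK$ for any Hamiltonian $s$-$t$ path $U$. This is standard: the degree constraints follow because every internal vertex of the path has exactly two incident edges in $U$, while $s$ and $t$ have exactly one. For any cut $C\subsetneq V$ with $C\neq\emptyset$, the edges of $U$ in $\delta(C)$ alternate sides along the path, so $|U\cap \delta(C)|$ has the same parity as $|C\cap\{s,t\}|$ (more precisely, it is odd iff exactly one of $s,t$ lies in $C$). Combined with the fact that $U$ visits every vertex, this gives $\chi^U(\delta(C))\geq 2$ when $|C\cap\{s,t\}|\in\{0,2\}$ and $\chi^U(\delta(C))\geq 1$ when $|C\cap\{s,t\}|=1$, matching all constraints of $\PHK$.

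Next, fix any $B\in \mathcal{B}$, so $B$ is an $s$-$t$ cut. By the parity argument above, $\chi^U(\delta(B))=|U\cap \delta(B)|$ is a positive odd integer. Hence either $\chi^U(\delta(B))=1$, in which case condition~\ref{item:BGoodSmall} of Definition~\ref{def:Bgood} applies (integrality of $\chi^U$ on $\delta(B)$ is automatic, since $\chi^U$ is $\{0,1\}$-valued on all of $E$), or $\chi^U(\delta(B))\geq 3$, which is exactly condition~\ref{item:BGoodLarge}. Since $B\in \mathcal{B}$ was arbitrary, $\chi^U$ is $\mathcal{B}$-good, completing the proof.
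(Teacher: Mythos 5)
Your proof is correct and follows the same route as the paper: the paper's justification is precisely the parity argument that any $s$-$t$ path crosses any $s$-$t$ cut an odd (hence positive) number of times, so the crossing number is either $1$ (giving condition~\ref{item:BGoodSmall}, with integrality automatic) or at least $3$ (giving condition~\ref{item:BGoodLarge}). Your additional explicit verification that $\chi^U\in\PHK$ is a detail the paper leaves implicit but is a reasonable inclusion, since membership in $\PHK$ is part of Definition~\ref{def:Bgood}.
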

This follows immediately from the fact that any $s$-$t$ path crosses any $s$-$t$ cut an odd number of times. Thus, a $\mathcal{B}(x^*)$-good point $y\in \PHK$ minimizing $\ell(y)$ will therefore satisfy $\ell(y)\leq \OPT$, as desired.

We are now ready to provide a description of our $1.5$-approximation, highlighted in Algorithm~\ref{alg:mainAlg} below.
\begin{algorithm2e}
\begin{enumerate}[label=\arabic*.,ref=\arabic*,leftmargin=0.5em,itemsep=0.2em]
\item\label{algitem:getHKsol} Obtain optimal solution $x^*$ to Held-Karp relaxation, i.e., $x^*$ is a minimizer of $\min\{\ell(x) \mid x\in \PHK\}$.

\item\label{algitem:getBGoodPoint} Determine a $\mathcal{B}(x^*)$-good point $y\in \PHK$ minimizing $\ell(y)$.

\item\label{algitem:getT} Compute a shortest spanning tree $T\subseteq \supp(y)$ in the graph $(V,\supp(y))$ w.r.t.~lengths $\ell$.\footnotemark{}

\item\label{algitem:getJ} Compute a minimum length $(\odd(T)\symdiff \{s,t\})$-join $J$ in $G$.

\item Return solution obtained by shortcutting the Eulerian $s$-$t$ trail with characteristic vector $\chi^T + \chi^J$.
\end{enumerate}
\caption{A $1.5$-approximation for Path TSP}
\label{alg:mainAlg}
\end{algorithm2e}

To complete the description of our algorithm, and prove that it can be performed efficiently, we still have to provide an efficient procedure to find a minimum length $\mathcal{B}(x^*)$-good point $y\in \PHK$, as required in step~\ref{algitem:getBGoodPoint} of the algorithm. Before doing so, we show that Algorithm~\ref{alg:mainAlg} indeed returns a $1.5$-approximate solution. As highlighted above, we obtain this result by showing that the tree $T$ computed in the algorithm together with the point $z=\sfrac{x^*}{2} + \sfrac{y}{2}$ fulfill the three properties mentioned at the beginning of this section.
\footnotetext{$\supp(y)\coloneqq \{e\in E\mid y(e) > 0\}$ denotes the support of $y$.}
\begin{theorem}
Algorithm~\ref{alg:mainAlg} returns a $1.5$-approximate solution for Path TSP.
\end{theorem}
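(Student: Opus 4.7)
The plan is to verify the three properties listed at the beginning of Section~\ref{sec:approach} for the tree $T$ built in step~\ref{algitem:getT} together with the convex combination $z \coloneqq \sfrac{x^*}{2} + \sfrac{y}{2}$. Since both $x^*$ and $y$ lie in $\PHK$, so does $z$. Properties~\ref{item:lightT} and~\ref{item:lightz} are essentially immediate: $\ell(x^*) \leq \OPT$ by LP relaxation, $\ell(y) \leq \OPT$ because $\mathcal{B}(x^*)$-goodness is a valid relaxation of Path TSP by Observation~\ref{obs:BGoodIsRelax} (and $y$ is a minimum-length such point), which gives $\ell(z) \leq \OPT$; and since $y \in \PHK$ lies in the spanning tree polytope of $G$ and in particular is a convex combination of spanning trees contained in $\supp(y)$, a shortest spanning tree in $(V, \supp(y))$ has length $\ell(T) \leq \ell(y) \leq \OPT$. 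Once all three properties are established, the $1.5$-approximation follows from the Wolsey-style argument already sketched in the text (shortest $Q_T$-join has length at most $\sfrac{\ell(z)}{2}$ by property~\ref{item:zIsJoinGood}, and shortcutting preserves length by the triangle inequality).

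The real work is verifying property~\ref{item:zIsJoinGood}, that is, $z(\delta(C)) \geq 2$ for every $Q_T$-cut $C\subseteq V$. I will split into cases depending on the interaction between $C$ and $\{s,t\}$:

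\textbf{Case 1: $|C \cap \{s,t\}| \in \{0,2\}$.} Here $z \in \PHK$ directly gives $z(\delta(C)) \geq 2$.

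\textbf{Case 2: $C$ is an $s$-$t$ cut with $x^*(\delta(C)) \geq 3$.} Then $z(\delta(C)) \geq \sfrac{3}{2} + \sfrac{y(\delta(C))}{2} \geq \sfrac{3}{2} + \sfrac{1}{2} = 2$, using $y \in \PHK$.

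\textbf{Case 3: $C \in \mathcal{B}(x^*)$.} Here $\mathcal{B}(x^*)$-goodness of $y$ splits further. If $y(\delta(C)) \geq 3$, then $z(\delta(C)) \geq \sfrac{1}{2} + \sfrac{3}{2} = 2$. Otherwise, by Definition~\ref{def:Bgood}\ref{item:BGoodSmall}, $y$ is integral on $\delta(C)$ with $y(\delta(C)) = 1$, so exactly one edge $f \in \delta(C)$ satisfies $y(f) = 1$ and all other edges of $\delta(C)$ lie outside $\supp(y)$.

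The key observation---and I expect this to be the one nontrivial step---is that in this last sub-case, $C$ is automatically \emph{not} a $Q_T$-cut, so there is nothing to verify. Indeed, since $T \subseteq \supp(y)$, we have $T \cap \delta(C) \subseteq \{f\}$; and since $T$ is a spanning tree it must connect $s$ to $t$ across the $s$-$t$ cut $C$, so $f \in T$ and $|T \cap \delta(C)| = 1$. A standard parity argument (the sum of $T$-degrees inside $C$ has parity $|\odd(T) \cap C|$, and modulo $2$ equals $|T \cap \delta(C)|$) then shows that, for an $s$-$t$ cut $C$, membership in a $Q_T$-cut is equivalent to $|T \cap \delta(C)|$ being \emph{even}; with $|T \cap \delta(C)| = 1$, the cut $C$ is not a $Q_T$-cut.

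This completes the case analysis and hence establishes property~\ref{item:zIsJoinGood}. Combining it with properties~\ref{item:lightT} and~\ref{item:lightz} via the Wolsey-type chain of inequalities $\ell(T) + \ell(J) \leq \ell(T) + \sfrac{\ell(z)}{2} \leq \sfrac{3}{2}\OPT$ yields the theorem.
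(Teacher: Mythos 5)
Your proof is correct and follows essentially the same route as the paper: the same three properties for $T$ and $z=\sfrac{x^*}{2}+\sfrac{y}{2}$, the same three-way case analysis on $Q_T$-cuts, and the same parity argument showing that integral $1$-cuts of $y$ cannot be $Q_T$-cuts (you are in fact slightly more explicit than the paper in noting that $|T\cap\delta(C)|=1$ requires $T$ to cross the cut because it is spanning). The only cosmetic omission is the remark that cuts with $t\in C$, $s\notin C$ are handled by passing to the complement, which is immediate.
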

\begin{proof}
We start by bounding the length of $T$. Notice first that due to Observation~\ref{obs:BGoodIsRelax}, the characteristic vector of the optimal solution to the Path TSP problem is $\mathcal{B}(x^*)$-good, and hence, the point $y$ computed in step~\ref{algitem:getBGoodPoint} of Algorithm~\ref{alg:mainAlg} fulfills
\begin{equation*}
\ell(y) \leq \OPT\enspace,
\end{equation*}
because $y\in \PHK$ is the shortest $\mathcal{B}(x^*)$-good point in $\PHK$.

Using the well-known property that $\PHK$ is contained in the spanning tree polytope of $G$ (see, e.g.,~\cite{an_2015_improving}), and $y\in \PHK$, we obtain $\ell(T)\leq \ell(y)$ because $T$ is a shortest spanning tree in $(V,\supp(y))$, which, together with $\ell(y)\leq \OPT$, implies
\begin{equation}\label{eq:TIsShort}
\ell(T) \leq \OPT\enspace.
\end{equation}

Let $Q_T \coloneqq \odd(T)\symdiff\{s,t\}$. As discussed, to bound the length of the shortest $Q_T$-join $J$, computed in step~\ref{algitem:getJ} of the algorithm, we show that by defining 
\begin{equation*}
z = \frac{1}{2} x^* + \frac{1}{2} y\enspace,
\end{equation*}
the point $\sfrac{z}{2}$ is in $\TJdom{Q_T}$. This will imply the desired result since we then get 
\begin{equation}\label{eq:JIsShort}
\ell(J) \leq \ell\left(\frac{z}{2}\right) = \frac{1}{4}\ell(x^*) + \frac{1}{4}\ell(y) \leq \frac{1}{2} \OPT\enspace,
\end{equation}
where the second inequality follows from $\ell(x^*)\leq \OPT$, because $x^*$ is an optimal solution to a relaxation of the Path TSP problem (namely the Held-Karp relaxation), and $\ell(y)\leq \OPT$.
Finally,~\eqref{eq:TIsShort} and~\eqref{eq:JIsShort} imply that the $s$-$t$ trail described by $\chi^T + \chi^J$ has length at most $\sfrac{3}{2}\cdot \OPT$, as desired.

Thus, it remains to show $\sfrac{z}{2} \in \TJdom{Q_T}$. As already mentioned in the outline of our approach, we have $z\in \PHK$ because it is the midpoint of two points in $\PHK$. Hence, for any $Q_T$-cut $C\subseteq V$ with $|C\cap \{s,t\}| \in \{0,2\}$, we have $\sfrac{z(\delta(C))}{2}\geq 1$, as required by the description~\eqref{eq:domQJoin} of $\TJdom{Q_T}$. Moreover, for any $Q_T$-cut $C\subseteq V$ with $s\in C, t\notin C$, and $C\not\in \mathcal{B}(x^*)$, we have 
\begin{equation*}
\frac{1}{2} z(\delta(C)) = \frac{1}{4}\big( x^*(\delta(C)) + y(\delta(C)) \big) \geq 1\enspace,
\end{equation*}
due to $x^*(\delta(C))\geq 3$ as $C\not\in \mathcal{B}(x^*)$, and $y(\delta(C))\geq 1$ as $y\in \PHK$. (Notice that we do not have to separately consider any cut $C\subseteq V$ with $t\in C$ and $s\notin C$ because it corresponds to the same constraint in $\TJdom{Q_T}$ as the $s$-$t$ cut $V\setminus C$.)
It remains to consider $Q_T$-cuts $C\subseteq V$ with $C\in \mathcal{B}(x^*)$. Because $y$ is $\mathcal{B}(x^*)$-good, we are in one of the following two situations:
\begin{enumerate*}
\item $y(\delta(C)) \geq 3$, or
\item $y(\delta(C)) =1$ and $y$ is integral on the edges $\delta(C)$.
\end{enumerate*}
In the first case we obtain $\sfrac{z(\delta(C))}{2} \geq 1$ because, apart from $y(\delta(C))\geq 3$, we have $x^*(\delta(C))\geq 1$ as $x^*\in \PHK$. Finally, it turns out that there are no $Q_T$-odd cuts $C\in \mathcal{B}(x^*)$ of the second type due to the following. Since $y(e)=0$ for all edges of $\delta(C)$ except for one, and $T\subseteq \supp(y)$, we must have $|T\cap \delta(C)| = 1$. However, an $s$-$t$ cut $C\subseteq V$ with $|T\cap \delta(C)|$ odd cannot be a $Q_T$-cut because
\begin{align*}
|C\cap \odd(T)| &\equiv \sum_{v\in C} |\delta(v)\cap T| \pmod{2} \\
                &= 2 |\{\{u,v\}\in T \mid u,v \in C\}| + |T\cap \delta(C)| \enspace,
\end{align*}
which implies that $|C\cap \odd(T)|$ is odd, and hence $|C\cap Q_T| = |C\cap (\odd(T)\symdiff\{s,t\})|$ is even because $s\in C$ and $t\not\in C$. Thus, $\sfrac{z}{2}\in \TJdom{Q_T}$, as desired.
\end{proof}

It remains to show that step~\ref{algitem:getBGoodPoint} can be performed efficiently. To this end, we show that dynamic programming ideas along the lines of Traub and Vygen~\cite{traub_2018_approaching} can be leveraged for this purpose, without recursive calls. We will prove the following theorem in Section~\ref{sec:dynProg}.
\begin{theorem}\label{thm:dynProg}
Let $\mathcal{B} \subseteq \{C\subseteq V \mid s\in C,\; t\not\in C\}$. One can determine in time polynomial in $|\mathcal{B}|$ and the input size of $(G,s,t,\ell)$ a $\mathcal{B}$-good point $y\in \PHK$ of minimum length $\ell(y)$.
\end{theorem}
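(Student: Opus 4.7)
The plan is to reduce the computation of a minimum-length $\mathcal{B}$-good point to the enumeration of finitely many ``templates,'' each giving rise to a single LP, and then to avoid the exponential enumeration via dynamic programming. Call a cut $B \in \mathcal{B}$ \emph{wide} (with respect to $y$) if $y(\delta(B)) \geq 3$, and \emph{pinned at} $f_B \in \delta(B)$ if $y(f_B) = 1$ and $y(e) = 0$ for all other $e \in \delta(B)$. A template specifies, for each $B \in \mathcal{B}$, whether $B$ is wide or pinned and, in the latter case, the identity of $f_B$. Given a template, the best $\mathcal{B}$-good $y$ respecting it (if one exists) is the optimum of the LP that minimizes $\ell(y)$ over $y \in \PHK$ subject to $y(\delta(B)) \geq 3$ for each wide $B$ and the equations $y(f_B) = 1$, $y(e) = 0$ on $\delta(B) \setminus \{f_B\}$ for each pinned $B$. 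This LP is polynomial-time solvable via the ellipsoid method, since $\PHK$ admits a polynomial-time separation oracle through minimum-cut computations.

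Because there are up to $(|E|+1)^{|\mathcal{B}|}$ templates, I would follow (and simplify) the dynamic-programming strategy of Traub and Vygen~\cite{traub_2018_approaching}. The DP fixes an ordering of the cuts in $\mathcal{B}$ and processes them one at a time, maintaining a table indexed by a compact \emph{signature} of the decisions committed so far. Each entry stores the optimum value of the partial LP restricted to those decisions; transitions correspond to extending the template by declaring the next cut to be wide or pinned at one of the $O(|E|)$ admissible edges, with a cost update coming from the corresponding incremental LP. Once all cuts are handled, the minimum table entry and its template yield the optimal $\mathcal{B}$-good point via one final LP solve.

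The principal obstacle is designing the signature so that the table has polynomial size while still supporting correct Bellman updates, and this is where the approach departs decisively from prior work. In Traub--Vygen's setting, narrow cuts form a chain (An, Kleinberg, and Shmoys~\cite{an_2015_improving}), so the signature can be reduced to a single active pin from the immediately preceding cut; in our setting, the cuts in $\mathcal{B}$ are not chain-ordered, so a direct translation fails. I expect the technical heart of Section~\ref{sec:dynProg} to consist of three ingredients: (a) picking an ordering of $\mathcal{B}$ and a signature that capture only a polynomial amount of information surviving from the past, (b) verifying via an exchange-type argument that this information suffices to reconstruct an optimal template, and (c) bounding the state space so the total running time becomes polynomial in $|\mathcal{B}|$ and the input size, with the latter being exactly the property that (together with Karger's bound on $|\mathcal{B}(x^*)|$) will later support the overall polynomial runtime of Algorithm~\ref{alg:mainAlg}.
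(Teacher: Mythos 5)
There is a genuine gap: you correctly set up the template/LP framework and correctly identify that the absence of a chain structure on $\mathcal{B}$ is the crux, but you then leave exactly that crux unresolved --- your items (a), (b), (c) are a wish list for a signature-based DP rather than a construction, and it is not clear that a DP which processes the cuts of an arbitrary family $\mathcal{B}$ one at a time under some ordering, deciding ``wide or pinned'' per cut with a compact surviving signature, exists at all. The missing idea that makes the argument go through is that \emph{no structure on $\mathcal{B}$ is needed, because the pinned cuts impose structure on themselves}: any cut $B$ with $y(\delta(B))=1$ is a narrow cut of $y\in\PHK$, and narrow cuts of any Held--Karp point form a chain (by the submodularity argument of An, Kleinberg, and Shmoys~\cite{an_2015_improving}: two crossing cuts of $y$-load $1$ would force $2\geq 4$). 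Hence an optimal solution's pinned cuts are automatically $B_1\subsetneq\cdots\subsetneq B_k$, and one only needs to enumerate chains together with the pinned edges $\{v_i,u_i\}$ --- which is a shortest-path computation in an auxiliary digraph on $O(|\mathcal{B}|\cdot n)$ nodes of the form $(B,u)$, not a per-cut decision DP over all of $\mathcal{B}$.

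Two further points are needed to complete the argument and are absent from your proposal. First, the wideness constraints for the non-pinned cuts must be allocated correctly: a cut $B$ that \emph{crosses} some pinned $B_i$ gets $y(\delta(B))\geq 3$ for free by the same submodularity computation, so the constraint $x(\delta(B))\geq 3$ only needs to be imposed, inside the LP attached to an arc from $(B_i,u_i)$ to $(B_{i+1},v_{i+1})$, for those $B\in\mathcal{B}$ nested between $B_i$ and $B_{i+1}$ and separating $u_i$ from $v_{i+1}$ (the remaining nested cuts are covered by the two pinned edges plus the cut constraints of the sub-relaxation). Second, for the Bellman recursion to be valid the objective must decompose additively along the path; this requires showing that the global LP for a fixed chain splits into independent Held--Karp-type LPs on the induced subgraphs $G[B_{i+1}\setminus B_i]$ (with endpoints $u_i$, $v_{i+1}$), and conversely that gluing optimal solutions of these sub-LPs with the pinned edges yields a point of $\PHK$ (via the degree-constrained spanning-tree description). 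Your single ``final LP solve'' and vague ``incremental LP'' cost updates do not address either issue. These are precisely the contents of Section~\ref{sec:dynProg}, and without them the reduction from exponentially many templates to a polynomial computation is not established.
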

Observe that instead of requiring any particular structure of $\mathcal{B}$ (like being a chain), all we need is that $|\mathcal{B}|$ is polynomially bounded. This follows from a seminal result of Karger on near-minimum cuts.
\begin{lemma}\label{lem:cutBound}
Let $z\in \PHK$. Then the family $\mathcal{B}(z)$ of $s$-$t$ cuts of $z$-value strictly less than $3$, i.e.,
\begin{equation*}
\mathcal{B}(z) \coloneqq \{C\subseteq V \mid s\in C,\; t\not\in C,\; z(\delta(C)) < 3\}\enspace,
\end{equation*}
satisfies $|\mathcal{B}(z)|\leq n^4$ and can be computed deterministically in $O(m n^4)$ time, or through a randomized algorithm in $O(n^4 \log^2 n)$ time with high probability, where $n\coloneqq |V|$ and $m\coloneqq |\supp(z)|$.
\end{lemma}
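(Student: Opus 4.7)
The plan is to deduce the lemma from Karger's classical bound on the number of near-minimum cuts, applied to a small reweighting of $z$ that raises the global minimum cut value to $2$. Applying Karger directly to $(G,z)$ would only bound $|\mathcal{B}(z)|$ by $n^6$, since cuts of $z$-value $<3$ are only a factor $3$ above the global minimum in $(G,z)$ (which equals $1$, attained at the cut $\{s\}$); the reweighting tightens this to a factor strictly below $2$.

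Concretely, I would construct a modified weighting $z'$ on the graph $G' \coloneqq (V, E \cup \{\{s,t\}\})$, adding the edge $\{s,t\}$ if it is absent, by setting $z'(e) = z(e)$ for every $e \neq \{s,t\}$ and $z'(\{s,t\}) = z(\{s,t\}) + 1$, with the convention $z(\{s,t\}) = 0$ if $\{s,t\} \notin E$. Since $z \in \PHK$, every non-$s$-$t$ cut $C$ has $z'(\delta(C)) = z(\delta(C)) \geq 2$, while every $s$-$t$ cut $C$ has $z'(\delta(C)) = z(\delta(C)) + 1 \geq 2$, and the cut $\{s\}$ attains equality. Hence the global minimum cut of $(G', z')$ equals $c' = 2$.

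Now any $C \in \mathcal{B}(z)$ satisfies $z'(\delta(C)) = z(\delta(C)) + 1 < 4 = 2 c'$, so $\mathcal{B}(z)$ injects into the family of cuts of $(G', z')$ of value strictly less than $2 c'$, i.e.\ the $\alpha$-near-minimum cuts for every $\alpha < 2$. Karger's cut counting theorem bounds such a family by $n^{2\alpha} \leq n^4$, yielding the combinatorial estimate $|\mathcal{B}(z)| \leq n^4$. For the algorithmic parts, I would invoke Karger's enumeration algorithms on $(G', z')$ followed by a linear-time filter that retains only the $s$-$t$ cuts of original $z$-value $<3$: the recursive random contraction of Karger–Stein lists every $2$-near-minimum cut in $O(n^4 \log^2 n)$ expected time with high probability, while a deterministic counterpart based on tree packings or iterated min-cut computations on the sparse graph $(V, \supp(z) \cup \{\{s,t\}\})$ with $m$ edges yields $O(m n^4)$ time.

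The main delicate point is the reweighting step: one must verify that adding weight $1$ on the $\{s,t\}$ edge raises the $z$-load of exactly the $s$-$t$ cuts by $1$ (and never alters non-$s$-$t$ cuts), so that $\mathcal{B}(z)$ maps cleanly into the family of $2$-near-minimum cuts of $(G', z')$ and so that Karger's enumeration returns a superset of $\mathcal{B}(z)$ within the claimed budget. With that checked, both the size bound and the running-time claims follow immediately from the cited theorems of Karger.
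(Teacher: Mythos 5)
Your proposal is correct and follows essentially the same route as the paper: add an auxiliary $\{s,t\}$ edge of weight $1$ so that every cut has value at least $2$ and the cuts of interest have value strictly below $4$, then invoke Karger's bound and enumeration algorithms (the paper cites Nagamochi--Nishimura--Ibaraki for the deterministic $O(mn^4)$ enumeration and Karger--Stein for the randomized one) and filter for $s$-$t$ cuts. The only cosmetic difference is that the paper adds a new parallel edge to the support graph rather than incrementing the weight of an existing $\{s,t\}$ edge, which changes nothing in the argument.
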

We highlight that a randomized algorithm computing $\mathcal{B}(z)$ in time $O(n^4 \log^2 n)$ with high probability, as mentioned in the above lemma, is an algorithm that, for any $c>0$ (possibly depending on the input), returns in time $O(c\cdot n^4\log^2 n)$ the set $\mathcal{B}(z)$ with probability at least $1-O(n^{-c})$; and with probability at most $O(n^{-c})$ an incorrect family may be returned. Moreover, notice that if $z$ is a vertex of $\PHK$, then $|\supp(z)| = O(n)$, which follows by standard combinatorial uncrossing arguments (see, e.g.,~\cite{vempala_1999_convex}, for an application of this technique to TSP, which easily carries over to Path TSP).\footnote{To show sparsity of vertex solutions, one can also observe that $\PHK$ is the natural relaxation of a degree-bounded spanning tree problem~(see, e.g.,~\cite{schalekamp_2018_layers}), for which Goemans~\cite{goemans_2006_minimum} showed through combinatorial uncrossing that any vertex solution has support bounded by $2n-3$, where $n$ is the number of vertices.}
\begin{proof}[Proof of Lemma~\ref{lem:cutBound}]
Let $H=(V,F)$ be the graph obtained from $(V,\supp(z))$ by adding an additional edge $f=\{s,t\}$ between $s$ and $t$. Let $z_H\in \mathbb{R}^F$ be the vector defined by $z_H(e) = z(e)$ for $e\in \supp(z)$, and $z_H(f) =1 $. Notice that by adding $f$ with a $z_H$-value of $1$, any cut $C\subsetneq V$, $C\neq \emptyset$ in $H$ satisfies $z_H(\delta(C))\geq 2$: For cuts $C$ not separating $s$ and $t$ we have $z_H(\delta(C)) = z(\delta(C))\geq 2$ because $z\in \PHK$, and for cuts separating $s$ and $t$ we have $z_H(\delta_H(C)) = z(\delta(C)) + 1 \geq 2$, again using $z(\delta(C))\geq 1$ as $z\in \PHK$.\footnote{We denote by $\delta_H(C)$ all edges in $H$ with precisely one endpoint in $C$.}
Moreover, since the value of all $s$-$t$ cuts in $H$ with respect to $z_H$ increased by precisely one unit compared to $s$-$t$ cuts in $G$ with respect to $z$, the set $\mathcal{B}(z)$ can be described by
\begin{equation*}
\mathcal{B}(z) = \left\{ C\subseteq V \mid s\in C,\; t\not\in C,\; z_H(\delta_H(C)) < 4\right\}\enspace.
\end{equation*}
Hence, all cuts in $\mathcal{B}(z)$ are cuts of $z_H$-value at most twice the smallest $z_H$-value of any cut in $H$. By a seminal result of Karger~\cite{karger_1993_global,karger_2000_minimum} there are at most $n^4$ such cuts. Moreover, they can be enumerated by a deterministic procedure in $O(|F| n^4) = O(m n^4)$ time~\cite{nagamochi_1997_computing}, or by a randomized one in $O(n^4 \log^2 n)$ time~\cite{karger_1996_new}.
Thus, enumerating all cuts of value strictly less than $4$ and keeping the $s$-$t$ cuts among them, we obtain the claimed guarantees for computing $\mathcal{B}(z)$.
\end{proof}

The above statements imply that Algorithm~\ref{alg:mainAlg} is efficient.
\begin{corollary}
Algorithm~\ref{alg:mainAlg} is efficient.
\end{corollary}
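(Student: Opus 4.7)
The plan is to verify each of the five steps of Algorithm~\ref{alg:mainAlg} runs in polynomial time. Four of the steps are standard. Step~\ref{algitem:getHKsol} requires solving the Held-Karp LP, which can be done in polynomial time via the ellipsoid method with a min-cut-based separation oracle for the exponentially many cut constraints (and, if desired, a standard post-processing step yields a vertex solution $x^*$, so that by combinatorial uncrossing $|\supp(x^*)|=O(n)$). Step~\ref{algitem:getT} is a minimum-weight spanning tree problem on $(V,\supp(y))$, which any textbook algorithm (e.g.\ Kruskal or Prim) solves in near-linear time. Step~\ref{algitem:getJ} is a minimum weight $T$-join problem, which Edmonds' reduction to minimum-weight perfect matching solves in polynomial time. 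Finally, the last step performs a linear-time walk along the Eulerian $s$-$t$ trail determined by $\chi^T+\chi^J$, skipping repeated vertices.

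The only non-trivial step to analyze is step~\ref{algitem:getBGoodPoint}, and this is where I would concentrate the argument. My plan is to combine the two results from the preceding subsections in a straightforward way. First, apply Lemma~\ref{lem:cutBound} to $z=x^*$: this gives an explicit list of the $s$-$t$ cuts $\mathcal{B}(x^*)$ in polynomial time, together with the bound $|\mathcal{B}(x^*)|\leq n^4$. Then feed $\mathcal{B}:=\mathcal{B}(x^*)$ into Theorem~\ref{thm:dynProg}, which outputs a minimum-length $\mathcal{B}(x^*)$-good point $y\in\PHK$ in time polynomial in $|\mathcal{B}(x^*)|$ and the input size of $(G,s,t,\ell)$. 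Chaining these two polynomial bounds yields a polynomial overall running time for step~\ref{algitem:getBGoodPoint}.

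There is no real obstacle beyond bookkeeping: the potential worry is simply making the input sizes line up, since the running time in Lemma~\ref{lem:cutBound} depends on $m=|\supp(x^*)|$ and the running time in Theorem~\ref{thm:dynProg} depends on $|\mathcal{B}(x^*)|$. The first is controlled by choosing $x^*$ to be a vertex of $\PHK$ (so $m=O(n)$), while the second is controlled by Karger's bound $n^4$ provided by Lemma~\ref{lem:cutBound}. Combining the five steps, the entire algorithm is therefore efficient.
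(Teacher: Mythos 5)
Your proposal is correct and follows the same route as the paper: each step is handled by a standard polynomial-time procedure, and step~\ref{algitem:getBGoodPoint} is exactly the combination of Lemma~\ref{lem:cutBound} (to enumerate $\mathcal{B}(x^*)$ with $|\mathcal{B}(x^*)|\leq n^4$) and Theorem~\ref{thm:dynProg}. The extra care about $|\supp(x^*)|$ is fine but not needed, since the deterministic $O(mn^4)$ bound is polynomial even without a sparse vertex solution.
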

\begin{proof}
It is well-known that finding an optimal Held-Karp solution in step~\ref{algitem:getHKsol} can be done efficiently by using for example the ellipsoid algorithm, or through a compact extended formulation.
The second step of the algorithm is efficient due to Theorem~\ref{thm:dynProg} and Lemma~\ref{lem:cutBound}. Finally, computing a shortest spanning tree in step~\ref{algitem:getT} and finding a shortest $(\odd(T)\symdiff\{s,t\})$-join in step~\ref{algitem:getJ} can be done efficiently through classical procedures (see, e.g.,~\cite{schrijver_2003_combinatorial}).
\end{proof}

It remains to show Theorem~\ref{thm:dynProg}, which we do in the next section.
 \section{Dynamic program to compute shortest $\mathcal{B}$-good solution}\label{sec:dynProg}

Our dynamic program to prove Theorem~\ref{thm:dynProg} can be interpreted as a slight adaptation of the one introduced by Traub and Vygen~\cite{traub_2018_approaching}, avoiding recursive calls. Consequently, one could use similar arguments to the ones in~\cite{traub_2018_approaching} to show the properties we need. However, by avoiding recursive calls, the dynamic program can be presented in a substantially simpler way, and we provide a full description with complete and simplified proofs in the following.

Recall that we are given a family of $s$-$t$ cuts $\mathcal{B}\subseteq \{C\subseteq V \mid s\in C,\; t\not\in C\}$ and our goal is to find a shortest $\mathcal{B}$-good point $y\in \PHK$.
As in~\cite{traub_2018_approaching}, the goal of the dynamic program is to decide in which of the cuts in $\mathcal{B}$ only a single edge should be used. To build up intuition for the dynamic programming approach, consider a $\mathcal{B}$-good point $y\in \PHK$. Let $B_1,\ldots, B_k\in \mathcal{B}$ be the $s$-$t$ cuts in $\mathcal{B}$ such that for each $i\in \{1,\ldots, k\}$, we have $y(\{v_i,u_i\})=1$ for one edge $\{v_i,u_i\} \in \delta(B_i)$ and $y(e)=0$ for all other $e\in \delta(B_i)\setminus \{\{v_i,u_i\}\}$. It is not hard to see that the cuts $B_1, \ldots B_k$ must form a chain, and we choose the numbering such that $B_1\subsetneq B_2 \subsetneq \ldots \subsetneq B_k$.\footnote{That $B_1,\ldots, B_{k}$ form a chain also follows from the fact that narrow cuts with respect to any point $y\in\PHK$ form a chain~\cite{an_2015_improving}, and can be shown by a reasoning analogous to the one used in~\cite{an_2015_improving} as follows. If there were two sets $B_i$ and $B_j$ such that neither $B_i\subseteq B_j$ nor $B_j\subseteq B_i$, then the following contradiction arises: $2=y(\delta(B_i)) + y(\delta(B_j)) \geq y(\delta(B_i\setminus B_j)) + y(\delta(B_j\setminus B_j)) \geq 4$, where the first inequality follows from submodularity and symmetry of the cut function $S \longmapsto y(\delta(S))$, and the second one from $y\in \PHK$ and the fact that both $B_i\setminus B_j$ and $B_j\setminus B_i$ are nonempty sets not containing $s$ or $t$.} Moreover, we name the endpoints of $\{v_i,u_i\}$ such that $v_i\in B_i$ and $u_i\not\in B_i$. See Figure~\ref{fig:dynProgIdea} for an illustration of the introduced terms. For notational convenience, we define $B_0\coloneqq \emptyset$, $B_{k+1}\coloneqq V$, $u_0\coloneqq s$, and $v_{k+1}\coloneqq t$.
Moreover, we remark that we may have $u_i=v_{i+1}$ for some indices $i\in \{0,\ldots, k\}$. In particular, because $\PHK$ imposes degree constraints of $1$ at $s$ and $t$, and because $y$ is $\mathcal{B}$-good, there must be one edge incident to each $s$ and $t$ with $y$-value of $1$, and all other edges incident to $s$ or $t$ have $y$-value of $0$. Hence, $B_1=\{s\}$ and $B_k=V\setminus \{t\}$, and thus $u_0=v_1=s$ and $u_k=v_{k+1}=t$.

A crucial property that is exploited by the dynamic program, and implied by results we show later, is the following. Assume that we knew the cuts $B_1,\ldots, B_{k}$ and edges $\{v_i,u_i\}$ for $i\in \{1,\ldots, k\}$ for a shortest $\mathcal{B}$-good point $y^*\in \PHK$, and we are looking for the shortest $\mathcal{B}$-good point $y\in\PHK$ such that, among the cuts in $\mathcal{B}$, the cuts $B_1,\ldots, B_k$ are precisely the integral $1$-cuts, i.e., cuts $B\in \mathcal{B}$ in which $y$ is integral and $y(\delta(B))=1$, and furthermore, $y(\{v_i,u_i\})=1$ for $i\in \{1,\ldots, k\}$. Then $\mathcal{B}$-good points $y\in \PHK$ of this form are precisely the points $y\in \PHK$ such that for each $i\in \{1,\ldots, k\}$:
\begin{enumerate}
\item $y(\{v_i,u_i\})=1$ and $y(e)=0$ for all $e\in \delta(B_i)\setminus \{\{v_i,u_i\}\}$, and
\item the restriction of $y$ to the subgraph of $G$ induced by $B_{i+1}\setminus B_i$ is a solution to the Held-Karp relaxation for $u_i$-$v_{i+1}$ Path TSP for this graph with the additional property that it has a $y$-load of at least $3$ on each cut $B\in \mathcal{B}$ with $B_i\cup \{u_i\}\subseteq B \subseteq B_{i+1}\setminus \{v_{i+1}\}$.
\end{enumerate}
Because we may have $u_i=v_{i+1}$, we need to be clear about what we mean with a Held-Karp relaxation for Path TSP if start and endpoint are the same. For this case we use the natural extension in our context. Namely, if start and endpoint are equal, then the Held-Karp relaxation for Path TSP is empty if the graph contains at least two vertices, and it consists of only the zero vector if the graph has a single vertex, in which case this vertex must simultaneously be the start and endpoint.

\begin{figure}[h]
\begin{center}
\begin{tikzpicture}
\small
\def\ch{3} 
\def\dx{2.5} 
\def\ns{0.4cm} 
\def\dys{-0.6cm} 
\def\dyw{0.45cm} 

\begin{scope}[dashed,bend right=20]
\draw (0.3*\dx,0) node[below] {$B_0=\emptyset$} to coordinate (m0) ++(0,\ch);
\draw (1*\dx,0) node[below] {$B_1$} to coordinate (m1) ++(0,\ch);

\draw (2.5*\dx,0) node[below] {$B_i$} to coordinate (mi) ++(0,\ch);
\draw (3.5*\dx,0) node[below] {$B_{i+1}$} to coordinate (mi1) ++(0,\ch);

\draw (5*\dx,0) node[below] {$B_k$} to coordinate (mk) ++(0,\ch);
\draw (5.7*\dx,0) node[below] {$B_{k+1}=V$} to coordinate (mk1)++(0,\ch);
\end{scope}

\coordinate (d1) at ($(m1)!0.5!(mi) + (0,\dys)$);
\coordinate (d2) at ($(mi1)!0.5!(mk) + (0,\dys)$);

\begin{scope}[line cap=round,line width=2pt,dash pattern=on 0pt off 3\pgflinewidth]
\draw ($(d1)+(-0.5*\dyw,0)$) -- ($(d1) + (0.5*\dyw,0)$);
\draw ($(d2)+(-0.5*\dyw,0)$) -- ($(d2) + (0.5*\dyw,0)$);
\end{scope}

\begin{scope}[every node/.style={fill=black,circle,inner sep=0em, minimum size=3pt}]
\node (u0) at ($(m0) + (2*\ns,0)$) {};
\node (u1) at ($(m1) + (\ns,0)$) {};
\node (vi) at ($(mi) + (-\ns,0)$) {};
\node (ui) at ($(mi) + (\ns,0)$) {};
\node (vi1) at ($(mi1) + (-\ns,0)$) {};
\node (ui1) at ($(mi1) + (\ns,0)$) {};
\node (vk) at ($(mk) + (-\ns,0)$) {};
\node (vk1) at ($(mk1) + (-2*\ns,0)$) {};
\end{scope}

\begin{scope}[every node/.style={below=2pt}]
\node[align=left,xshift=-0.6em] at ($(u0)+(0.3,0)$) {$s=u_0$\\$\phantom{s}=v_1$};
\node at (u1) {$u_1$};
\node at (vi) {$v_i$};
\node at (ui) {$u_i$};
\node at (vi1) {$v_{i+1}$};
\node at (ui1) {$u_{i+1}$};
\node at (vk) {$v_{k}$};
\node[align=left,xshift=1.0em,yshift=0.2em] at ($(vk1)+(-0.3,0)$) {$t=u_{k}$\\$\phantom{t}=v_{k+1}$};
\end{scope}

\begin{scope}[thick, bend left]
\draw (u0) to (u1);
\draw (vi) to (ui);
\draw (vi1) to (ui1);
\draw (vk) to (vk1);
\end{scope}

\end{tikzpicture} \end{center}

\caption{Illustration of used terminology for a $\mathcal{B}$-good point $y\in \PHK$. The chain $B_1,\ldots, B_k\in \mathcal{B}$ are all $s$-$t$ cuts $B\in \mathcal{B}$ in which $y$ has a value of $1$ on precisely one edge $\{v_i,u_i\}\in \delta(B)$, and value $0$ on all other edges of $\delta(B)$. The naming of the endpoints of $\{v_i,u_i\}$ is chosen such that $v_i\in B_i$ and $u_i\not\in B_i$. For some indices $i\in \{0,\ldots, k\}$ we may have $u_i=v_{i+1}$. In particular, because $\PHK$ imposes degree constraints of $1$ at $s$ and $t$, we have $u_0=v_1=s$ and $u_k=v_{k+1}=t$.
}\label{fig:dynProgIdea}
\end{figure}
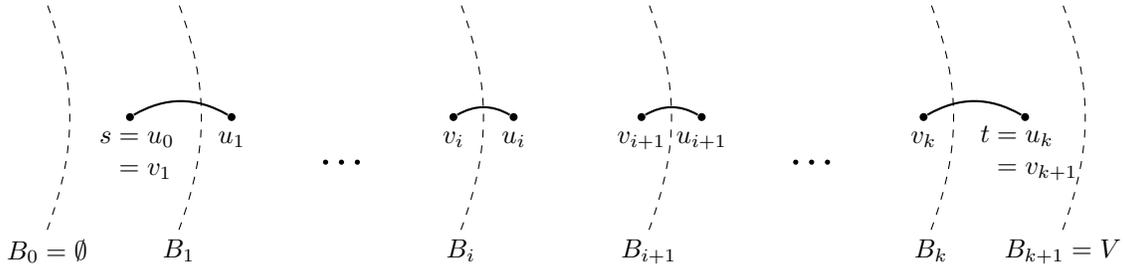

The dynamic program aims at finding the cuts $B_1,\ldots, B_{k}$ and the vertices $v_1,u_1,\ldots, v_k,u_k$, exploiting the above properties.
We present the dynamic program in terms of a shortest path problem in an auxiliary directed graph $H=(N,A)$, with nodes $N$ and arcs $A$, and with length function $d\colon A\longrightarrow \mathbb{R}_{\geq 0}$.
We call elements of $N$ \emph{nodes} and elements of $A$ \emph{arcs} to better separate, in terms of terminology, the auxiliary graph $H$ from the original graph $G$, in which we talk about vertices and edges, respectively.

The node set $N$ of $H$ is defined by
\begin{align*}
N &= N^{\texttt{+}} \cup N^{\texttt{-}}\enspace, \text{where}\\
N^{\texttt{+}} &= \left\{(B,u)\in \mathcal{B} \times V \mid u \not\in B\right\}
  \cup \{(\emptyset, s)\}\enspace, \text{and}\\
N^{\texttt{-}} &= \left\{(B,v)\in \mathcal{B} \times V \mid v \in B\right\}
  \cup \{(V, t)\}\enspace.
\end{align*}
Moreover, the arc set $A$ is given by 
\begin{align*}
A &= A_{\HK} \cup A_E\enspace, \text{where}\\
A_{\HK} &= \left\{\big((B^{\texttt{+}},u), (B^{\texttt{-}},v) \big)\in N^{\texttt{+}} \times N^{\texttt{-}}
\;\middle\vert\; B^{\texttt{+}}\subseteq B^{\texttt{-}}, u,v \in B^{\texttt{-}}\setminus B^{\texttt{+}}\right\}\enspace, and\\
A_{E} &= \left\{\big((B^{\texttt{-}},v), (B^{\texttt{+}},u) \big)\in N^{\texttt{-}} \times N^{\texttt{+}}
\;\middle\vert\; B^{\texttt{-}} = B^{\texttt{+}} \right\}\enspace.
\end{align*}
The lengths $d\colon A\longrightarrow \mathbb{R}_{\geq 0}$ in $H$ are defined as follows. For arcs $a\in A_E$ we set
\begin{equation*}
d(a) = \ell(\{v,u\}) \quad \forall\, a=\big((B,v),(B,u)\big)\in A_E\enspace.
\end{equation*}
The length of an arc $a=((B^{\texttt{+}},u),(B^{\texttt{-}},v))\in A_{\HK}$ is defined to be the optimal value of the following linear program~\ref{eq:LPa}:
\begin{equation}\label{eq:LPa}
\begin{array}{r@{\;}ll}
d(a) = \min &\ell(x) & \\
&x \in \PHK(B^{\texttt{-}}\setminus B^{\texttt{+}}, u, v) & \\
&x(\delta(B)) \geq 3 &\forall B\in \mathcal{B} \text{ with } B^{\texttt{+}}\subseteq B \subseteq B^{\texttt{-}} \text{ and } 
u\in B, v\not\in B\enspace,
\end{array}
\tag{LP($a$)}
\end{equation}
where, for any vertex set $W\subseteq V$ and vertices $u,v\in W$, the polytope $\PHK(W,u,v)\subseteq \mathbb{R}^E$ describes the Held-Karp relaxation for $u$-$v$ Path TSP in $G[W]$, the subgraph of $G$ induced by the vertex set $W$.
For convenience, we define $\PHK(W,u,v)$ to be a polytope in $\mathbb{R}^E$ (instead of $\mathbb{R}^{E[W]}$, where $E[W]\subseteq E$ are all edges with both endpoints in $W$), where all coordinates corresponding to edges $e\in E\setminus E[W]$ are $0$ for any point in $\PHK(W,u,v)$.
More formally, for $u\neq v$ we have
\begin{equation*}
\PHK(W,u,v) \coloneqq \left\{
x\in \mathbb{R}^E_{\geq 0} \;\middle\vert\;
\begin{array}{r@{\;}c@{\;}ll}
x(\delta(C)) &\geq &2  & \forall\, C\subsetneq W,\; C\neq\emptyset,\; |C\cap \{u,v\}|\in \{0,2\}\\
x(\delta(C)) &\geq &1 & \forall\, C\subseteq W,\; |C\cap \{u,v\}|=1\\
x(\delta(w)) &= &2 & \forall\, w\in W\setminus \{u,v\}\\
x(\delta(u)) &= &1\\
x(\delta(v)) &= &1\\
x(e) &= &0 &\forall\, e\in E\setminus E[W]
\end{array}\right\}.
\end{equation*}
Moreover, if $u=v$ and $W=\{u\}$, then $\PHK(W,u,v)$ is defined to only contain the zero vector in $\mathbb{R}^E$, which corresponds to the trivial $u$-$u$ path not containing any edges. Finally, for $u=v$ and $|W|\geq 2$, the polytope $\PHK(W,u,v)$ is empty, and the value of $d(a)$ is set to $\infty$. (Equivalently, we could also simply delete any such arc $a$ from the auxiliary graph because it will be irrelevant in the shortest path problem we solve next.)

To find a $\mathcal{B}$-good point $y\in \PHK$ minimizing $\ell(y)$, we compute a shortest $(\emptyset,s)$-$(V,t)$ path with respect to $d$ in $H$. Lemma~\ref{lem:yIsMinimizer} below together with Observation~\ref{obs:BGoodIsRelax} imply that this path has finite length.
Let $(\emptyset,s)$, $(B_1,v_1)$, $(B_1,u_1)$, $(B_2,v_2)$, \ldots, $(B_k,u_k)$, $(V,t)\in N$ be the nodes on this shortest path. 
For convenience, we define $B_0\coloneqq\emptyset$, $B_{k+1}\coloneqq V$, $u_0 \coloneqq s$, and $v_{k+1}\coloneqq t$. Notice that by construction of $H$, we have $B_0 \subsetneq B_1 \subsetneq \ldots \subsetneq B_{k+1}$.
For each $i\in \{0,\ldots, k\}$, let $x^i\in \mathbb{R}^E$ be an optimal solution to~\ref{eq:LPa} for $a=((B_i,u_i),(B_{i+1},v_{i+1}))$. Note that the linear program~\ref{eq:LPa} is not infeasible, and $x^i$ is therefore well-defined, because the length of the shortest $(\emptyset,s)$-$(V,t)$ path is finite. We set
\begin{equation}\label{eq:defy}
y \coloneqq \sum_{i=0}^k x^i + \sum_{i=1}^k \chi^{\{\{v_i,u_i\}\}}\enspace.
\end{equation}
We highlight that by definition of the lengths $d$ in $H$, we have that $\ell(y)$ is equal to the length $d^*$ of a shortest $(\emptyset,s)$-$(V,t)$ path in $H$ with respect to $d$.

It remains to show that $y$ is indeed a shortest $\mathcal{B}$-good point, which we show below with the next three lemmas.
To prove these statements, it is convenient to use a well-known alternative description of $\PHK$ in terms of a degree-bounded spanning tree relaxation (see, e.g.,~\cite{schalekamp_2018_layers}), namely
\begin{equation}\label{eq:PHKasDBST}
\PHK = \left\{ x\in \mathbb{R}^E_{\geq 0} \;\middle\vert\;
\begin{array}{ll}
x \in \PST & \\
x(\delta(v))=2 & \forall\, v\in V\setminus \{s,t\}\\
x(\delta(v)) = 1 & \forall v\in \{s\}\symdiff \{t\}
\end{array}
\right\},
\end{equation}
where $\PST$ is the spanning tree polytope of $G$, which, as shown by Edmonds~\cite{edmonds_1971_matroids}, can be described by
\begin{equation*}
\PST = \left\{x\in \mathbb{R}_{\geq 0}^E \;\middle\vert\;
\begin{array}{r@{\;}c@{\;}ll}
x(E) &= &|V|-1\\
x(E[W]) &\leq &|W|-1 &\forall\; W\subseteq V, \;W\neq \emptyset 
\end{array}
\right\}.
\end{equation*}
We highlight that in~\eqref{eq:PHKasDBST} we allow for $s=t$, which will be useful later on when using this description for Held-Karp Path TSP relaxations $\PHK(W,u,v)$ for subroblems with $u=v$, used for the computation of the distances $d$.

\begin{lemma}\label{lem:yIsMinimizer}
The length $d^*$ of a shortest $(\emptyset,s)$-$(V,t)$ path in $H$ with respect to $d$ satisfies $d^* \leq \min\{\ell(z) \mid z\in \PHK, z \text{ is $\mathcal{B}$-good}\}$.
\end{lemma}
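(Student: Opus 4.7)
The plan is to take any $\mathcal{B}$-good $z\in\PHK$ and exhibit an $(\emptyset,s)$-$(V,t)$ path in $H$ whose $d$-length is at most $\ell(z)$. First, I extract the skeleton of $z$: let $B_1\subsetneq\cdots\subsetneq B_k$ be the cuts of $\mathcal{B}$ on which $z$ is integral with load $1$, and let $\{v_i,u_i\}$ be the unique edge in $\delta(B_i)$ with $z$-value $1$, oriented so that $v_i\in B_i$ and $u_i\notin B_i$. These cuts form a chain by the standard submodularity argument recalled earlier in the section, since they are in particular narrow with respect to $z$. With the conventions $B_0:=\emptyset$, $B_{k+1}:=V$, $u_0:=s$, $v_{k+1}:=t$, my candidate path is
\[\pi:\ (\emptyset,s)\to(B_1,v_1)\to(B_1,u_1)\to\cdots\to(B_k,u_k)\to(V,t).\]
For $\pi$ to lie in $H$ I need $u_i,v_{i+1}\in B_{i+1}\setminus B_i$ for each $i$; equivalently, the edges $\{v_i,u_i\}$ are pairwise distinct. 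Otherwise two chain cuts $B_i\subsetneq B_j$ would share their $1$-edge $\{v,u\}$, and by the chain property so would every intermediate cut; then in each slab $B_{l+1}\setminus B_l$ (with $i\le l<j$) every vertex has all its positive $z$-mass on edges inside that slab, forcing $z(E[B_{l+1}\setminus B_l])=|B_{l+1}\setminus B_l|$ and contradicting $z\in\PST$.

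The heart of the argument is to show, for each $A_{\HK}$-arc $a_i=((B_i,u_i),(B_{i+1},v_{i+1}))$, that the restriction $x^i:=z|_{E[B_{i+1}\setminus B_i]}$ is feasible for LP$(a_i)$, which yields $d(a_i)\le\ell(x^i)$. The engine is the identity
\[x^i(\delta(C))\ =\ z(\delta_G(C))\ -\ \mathbf{1}_{u_i\in C}\ -\ \mathbf{1}_{v_{i+1}\in C}\qquad\text{for every }C\subseteq B_{i+1}\setminus B_i,\]
which holds because $\{v_i,u_i\}$ is the only $z$-positive edge in $\delta(B_i)$ and $\{v_{i+1},u_{i+1}\}$ is the only one in $\delta(B_{i+1})$, so these are the only $z$-positive edges leaving $B_{i+1}\setminus B_i$. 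From this identity every required constraint drops out: for $|C\cap\{u_i,v_{i+1}\}|=0$ the inequality $x^i(\delta(C))\ge 2$ reduces to $z(\delta_G(C))\ge 2$, which holds since $C$ is then a non-$s$-$t$ cut of $G$; the case $|C\cap\{u_i,v_{i+1}\}|=2$ reduces to the previous one by the within-subgraph symmetry $\delta(C)=\delta((B_{i+1}\setminus B_i)\setminus C)$; and for $|C\cap\{u_i,v_{i+1}\}|=1$, taking the union with $B_i$ shows $x^i(\delta(C))=z(\delta_G(C\cup B_i))$, which is at least $1$ since $C\cup B_i$ is an $s$-$t$ cut in $G$. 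The additional constraints $x^i(\delta(B))\ge 3$ for $B\in\mathcal{B}$ with $B_i\subsetneq B\subsetneq B_{i+1}$, $u_i\in B$, $v_{i+1}\notin B$ reduce similarly to $x^i(\delta(B))=z(\delta_G(B))$; any such $B$ strictly separates two adjacent chain elements, so $B\notin\{B_1,\dots,B_k\}$, and $\mathcal{B}$-goodness then forces $z(\delta_G(B))\ge 3$.

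One degeneracy remains: if $u_i=v_{i+1}$ then LP$(a_i)$ is infeasible unless $B_{i+1}=B_i\cup\{u_i\}$. The same $\PST$-argument as in the first paragraph forces this equality---otherwise $u_i$'s two units of $z$-degree are consumed by $\{v_i,u_i\}$ and $\{v_{i+1},u_{i+1}\}$, both of which leave $B_{i+1}\setminus B_i$, and the remaining vertices would confine all positive $z$-mass to their own induced edges, again violating $z\in\PST$. Summing arc lengths and using that $\supp(z)$ partitions disjointly into the intra-block edges (contributing $\ell(x^i)$) and the $k$ chain $1$-edges (contributing $\ell(\{v_i,u_i\})$),
\[d(\pi)\ \le\ \sum_{i=0}^{k}\ell(x^i)\ +\ \sum_{i=1}^{k}\ell(\{v_i,u_i\})\ =\ \ell(z),\]
so $d^*\le d(\pi)\le\ell(z)$ for every $\mathcal{B}$-good $z\in\PHK$, as claimed. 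I expect the main obstacle to be the uniform verification of the subproblem cut constraints; this works cleanly precisely because $\mathcal{B}$-goodness pins the $z$-mass crossing $\delta(B_i)$ and $\delta(B_{i+1})$ to single edges, which is exactly what makes the structural identity above exact.
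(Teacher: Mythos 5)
Your proof is correct and follows the paper's overall strategy: build the path in $H$ from the chain of cuts on which $z$ is integral with load $1$, show the restriction of $z$ to each slab $B_{i+1}\setminus B_i$ is feasible for the corresponding \ref{eq:LPa}, and sum. Where you genuinely diverge is in the feasibility verification: the paper routes this through the degree-bounded spanning tree description~\eqref{eq:PHKasDBST} (degree constraints pin $z^i(E[B_{i+1}\setminus B_i])$ to $|B_{i+1}\setminus B_i|-1$, so membership in the forest polytope upgrades to the spanning tree polytope), whereas you verify the cut constraints of the explicit description of $\PHK(W,u,v)$ directly via the decomposition identity $x^i(\delta(C))=z(\delta_G(C))-\mathbf{1}_{u_i\in C}-\mathbf{1}_{v_{i+1}\in C}$. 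Your route is more self-contained, and it has the added benefit of making you confront two degeneracies that the paper's write-up glosses over: that the arcs $((B_i,u_i),(B_{i+1},v_{i+1}))$ actually exist in $H$ (i.e., consecutive cuts do not share their $1$-edge), and that $u_i=v_{i+1}$ forces $B_{i+1}=B_i\cup\{u_i\}$ so that the degenerate LP is the trivial feasible one rather than the infeasible one with $d(a)=\infty$; both of your arguments for these (via $z\in\PST$, or equivalently via $z(\delta(B_{i+1}\setminus B_i))\geq 2$) are sound. One caveat: your identity is stated for all $i\in\{0,\dots,k\}$, but the indicator $\mathbf{1}_{u_i\in C}$ must be dropped for $i=0$ and $\mathbf{1}_{v_{i+1}\in C}$ for $i=k$, since $\delta(B_0)=\delta(\emptyset)=\emptyset$ and $\delta(B_{k+1})=\delta(V)=\emptyset$ contribute no edge; as written, applying the identity to $C=\{s\}$ in the $i=0$ subproblem would give $x^0(\delta(s))=0$ rather than the correct value $1$. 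The cut-inequality cases you actually check are unaffected, and the corrected identity also yields the degree equalities of $\PHK(W,u_i,v_{i+1})$, which you should state explicitly since they are part of feasibility.
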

\begin{proof}
Let $z\in \PHK$ be a $\mathcal{B}$-good point, and we will show that $d^*\leq \ell(z)$. Let $\mathcal{B}_z\subseteq \mathcal{B}$ be the family of all $B\in \mathcal{B}$ such that $z(f)=1$ for precisely one edge $f\in \delta(B)$ and $z(e)=0$ for all $e\in \delta(B)\setminus \{f\}$. Hence, these are the sets in $\mathcal{B}$ for which $z$ fulfills point~\ref{item:BGoodSmall} of the definition of being $\mathcal{B}$-good.
Again, one can easily check that $\mathcal{B}_z$ is a chain, which also follows by the fact shown in~\cite{an_2015_improving} that narrow cuts with respect to any point in $\PHK$ form a chain. Hence, $\mathcal{B}_z=\{B_1, \ldots B_k\}$ with $s\in B_1\subsetneq B_2 \subsetneq \ldots \subsetneq B_k \not\ni t$. Moreover, we set $B_0\coloneqq \emptyset$ and $B_{k+1}\coloneqq V$. For $i\in \{1,\ldots, k\}$, let $\{v_i,u_i\}$ be the one edge in $\delta(B_i)$ with $z(\{v_i,u_i\})=1$, and we choose the naming such that $v_i \in B_i$ and $u_i \not\in B_i$. Furthermore, we set $u_0\coloneqq s$ and $v_{k+1}\coloneqq t$.

To show $d^*\leq \ell(z)$, we show that in $H$, the length $\overline{d}$ of the path along the vertices $(B_0,u_0)$, $(B_1,v_1)$, $(B_1,u_1)$, $(B_2,v_2), \dots , (B_{k+1},v_{k+1})$ is no larger than $\ell(z)$. The inequality $d^* \leq \ell(z)$ then follows because $d^*$ is the length (with respect to $d$) of a shortest $(\emptyset,s)$-$(V,t)$ path in $H$, and thus, $d^*\leq \overline{d}$.
Hence, it suffices to show that for each $i\in \{0,\ldots, k\}$, the vector $z^i\in \mathbb{R}^E$, defined to be equal to $z$ on all edges in $E[B_{i+1}\setminus B_i]$ and $z(e)=0$ for $e\in E\setminus E[B_{i+1}\setminus B_i]$, is feasible for~\ref{eq:LPa} with $a=((B_i,u_i),(B_{i+1},v_{i+1}))$. Indeed, this will imply $\ell(x^i) \leq \ell(z^i)$, where $x^i$ is an optimal solution to this LP, and hence
\begin{equation*}
\ell(z) = \sum_{i=0}^k \ell(z^i) + \sum_{i=1}^k \ell(\{v_i,u_i\})
        \geq \sum_{i=0}^k \ell(x^i) + \sum_{i=1}^k \ell(\{v_i,u_i\})
        = \overline{d}
        \geq d^*\enspace.
\end{equation*}
Because $z$ is $\mathcal{B}$-good, we have $z^i(\delta(B)) = z(\delta(B)) \geq 3$ for all $B\in \mathcal{B}$ with $B_i \subsetneq B \subsetneq B_{i+1}$ and $u_i\in B, v_{i+1}\notin B$, as required by~\ref{eq:LPa}. Thus, it remains to observe that $z^i\in \PHK(B_{i+1}\setminus B_i, u_i, v_{i+1})$, i.e., $z^i$ is a solution to the Held-Karp relaxation for $u_i$-$v_{i+1}$ Path TSP on the graph $G[B_{i+1}\setminus B_i]$.
This readily follows from the fact that $z\in \PHK$. In particular, one can check that $z^i$ satisfies the description of $\PHK(B_{i+1}\setminus B_i, u_i, v_{i+1})$ given by~\eqref{eq:PHKasDBST}, where the role of $G$, $s$, and $t$ is replaced by $G[B_{i+1}\setminus B_i]$, $u_i$, and $v_{i+1}$, respectively. The fact that $z^i$ fulfills the degree constraints immediately follows from $z\in \PHK$ and by using that $\{v_1,u_1\},\ldots, \{v_k,u_k\}$ all have $z$-value $1$, whereas all other edges in $\delta(B_1)\cup \ldots \cup \delta(B_k)$ have $z$-value $0$.
Moreover, since $z$ is in the spanning tree polytope of $G$, we have that $z^i$, which is obtained from $z$ by setting some coordinates to $0$, is in the forest polytope of $G[B_{i+1}\setminus B_i]$. However, due to the degree constraints, we have $z^i(E[B_{i+1}\setminus B_i])= |B_{i+1}\setminus B_i|-1$, and thus, $z^i$ must be in the spanning tree polytope of $G[B_{i+1}\setminus B_i]$. This shows $z^i\in \PHK(B_{i+1}\setminus B_i, u_i, v_{i+1})$ as desired and finishes the proof.
\end{proof}

It remains to show $y\in \PHK$ and that $y$ is indeed $\mathcal{B}$-good, which then implies, together with Lemma~\ref{lem:yIsMinimizer}, that $y\in \PHK$ is a shortest $\mathcal{B}$-good point, as desired.
In the proofs of the next two lemmas, let $(B_0=\emptyset, u_0=s), (B_1,v_1),(B_1,u_1),\ldots, (B_k,u_k), (B_{k+1}=V,v_{k+1}=t)\in N$ be the nodes on the shortest $(\emptyset,s)$-$(V,t)$ path in $H$ used to define $y$.

\begin{lemma}\label{lem:yIsInPHK}
$y\in \PHK$.
\end{lemma}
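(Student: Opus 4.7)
The plan is to verify $y\in \PHK$ through the alternative description \eqref{eq:PHKasDBST} of $\PHK$, checking the degree conditions and the spanning-tree condition separately.

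First I would handle the degree constraints. The structural point to exploit is that the arc constraints defining $H$ force $u_j\in B_{j+1}\setminus B_j$ and $v_j\in B_j\setminus B_{j-1}$, so, together with the chain $B_0\subsetneq B_1\subsetneq\ldots\subsetneq B_{k+1}$, each vertex $v\in V$ lies in a unique layer $B_{i+1}\setminus B_i$ and can appear as some $u_j$ only for $j=i$, and as some $v_j$ only for $j=i+1$. Consequently, among the vectors $x^0,\ldots,x^k$ only $x^i$ can contribute to $y(\delta(v))$, and among the unit edges $\chi^{\{v_j,u_j\}}$ only $\chi^{\{v_i,u_i\}}$ and $\chi^{\{v_{i+1},u_{i+1}\}}$ can. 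A short case analysis on whether $v$ equals $u_i$, $v_{i+1}$, both, or neither, combined with the degree constraints built into $\PHK(B_{i+1}\setminus B_i,u_i,v_{i+1})$, then yields $y(\delta(v))=2$ for $v\in V\setminus\{s,t\}$ and $y(\delta(s))=y(\delta(t))=1$.

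For the spanning-tree condition, I would write each $x^i$ as a convex combination $x^i=\sum_\alpha \lambda_\alpha^i\,\chi^{T_\alpha^i}$ of characteristic vectors of spanning trees $T_\alpha^i$ of $G[B_{i+1}\setminus B_i]$, which is possible because $\PHK(B_{i+1}\setminus B_i,u_i,v_{i+1})$ is contained in the spanning tree polytope of $G[B_{i+1}\setminus B_i]$ by \eqref{eq:PHKasDBST}. Multiplying these decompositions out across layers expresses $y$ as a convex combination of edge sets of the form $T_{\alpha_0}^0\cup\cdots\cup T_{\alpha_k}^k\cup\{\{v_1,u_1\},\ldots,\{v_k,u_k\}\}$. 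Each such edge set is a spanning tree of $G$: the connecting edges $\{v_i,u_i\}$ link layer $i-1$ to layer $i$ (since $v_i\in B_i$ and $u_i\notin B_i$), so they chain the layer-trees into a single connected subgraph on $V$ with exactly $n-1$ edges. Hence $y\in \PST$, and combined with the degree conditions this gives $y\in \PHK$.

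The one point requiring care, and the closest thing to an obstacle, is the degenerate case $u_i=v_{i+1}$, in which the layer $B_{i+1}\setminus B_i=\{u_i\}$ is a singleton, $x^i=0$, and the ``layer tree'' is empty: the singleton vertex receives its required degree $2$ entirely from the two adjacent connecting edges $\{v_i,u_i\}$ and $\{v_{i+1},u_{i+1}\}$, and it is covered in the spanning tree as an endpoint of those edges. The analogous boundary degeneracies (when $B_1=\{s\}$, so $v_1=s$ and $x^0=0$, or symmetrically at $t$) behave identically and produce the correct degree $1$ at $s$ and $t$. Apart from this bookkeeping, no genuine difficulty arises.
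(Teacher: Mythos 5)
Your proposal is correct and follows essentially the same route as the paper: verify the degree constraints layer by layer via the description~\eqref{eq:PHKasDBST}, and establish $y\in\PST$ from the facts that the layers $B_{i+1}\setminus B_i$ partition $V$, each $x^i$ lies in the spanning tree polytope of $G[B_{i+1}\setminus B_i]$, and the edges $\{v_i,u_i\}$ connect consecutive layers. The only cosmetic difference is that you certify $y\in\PST$ by an explicit product convex decomposition into spanning trees, whereas the paper uses the equivalent characterization that for every cost function $c$ there is a spanning tree $T$ with $c(T)\leq c(y)$; both rest on the identical combinatorial observation.
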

\begin{proof}
Recall that for $i\in \{0,\ldots, k\}$, the vector $x^i$ used in the definition of $y$ (see~\eqref{eq:defy}) satisfies $x^i \in \PHK(B_{i+1}\setminus B_i, u_i, v_{i+1})$, i.e., it describes a solution to the Held-Karp relaxation for $u_i$-$v_{i+1}$ Path TSP in the induced subgraph $G[B_{i+1}\setminus B_i]$ of $G$.
\footnote{For simplicity, we sometimes refer to $x^i$ (and other vectors with support within $E[B_{i+1}\setminus B_i]$) as a vector on the edges of $G[B_{i+1}\setminus B_i]$, for example by saying that it is a solution to the Held-Karp relaxation for Path TSP in the graph $G[B_{i+1}\setminus B_i]$; even though $x^i\in \mathbb{R}^E$ instead of $x^i\in \mathbb{R}^{E[B_{i+1}\setminus B_i]}$.}

First, this implies that $y$ fulfills the degree constraints in~\eqref{eq:PHKasDBST}. Indeed, let $v\in V$ and let $i\in \{0,\ldots, k\}$ be such that $v\in B_{i+1}\setminus B_i$. The value $y(\delta(v))$ is equal to $x^i(\delta(v))$ plus one additional unit if
$v=u_{i}$ and $i\neq 0$ (due to the edge $\{v_{i},u_{i}\}$) plus one additional unit if
$v=v_{i+1}$ and $i\neq k$ (due to the edge $\{v_{i+1},u_{i+1}\}$). Together with $x^i \in \PHK(B_{i+1} \setminus B_i, u_i, v_{i+1})$, this implies that $y$ fulfills the degree constraints in~\eqref{eq:PHKasDBST}.

Second, we also have $y\in \PST$ due to the following facts:
\begin{enumerate}
\item\label{item:BPartition} $B_{1}\setminus B_0,\ldots, B_{k+1}\setminus B_k$ is a partition of $V$.
\item\label{item:xiIsInPST} For each $i\in \{0,\ldots, k\}$, the vector $x^i\in \PHK(B_{i+1}\setminus B_{i}, u_i, v_{i+1})$, when restricted to $E[B_{i+1}\setminus B_i]$, is a solution to the Held-Karp relaxation for $u_i$-$v_{i+1}$ Path TSP in $G[B_{i+1}\setminus B_i]$. This implies by~\eqref{eq:PHKasDBST} that it is in the spanning tree polytope of $G[B_{i+1}\setminus B_i]$.
\item\label{item:oneEdgesConnect} The edges $\{v_i,u_i\}$ for $i\in \{1,\dots k\}$, which all have a $y$-value of $1$, are a spanning tree (actually even a path) in the graph obtained from $G$ when contracting $B_{i+1}\setminus B_i$ for each $i\in \{0,\ldots, k\}$.
\end{enumerate}
One way to see that the above points indeed imply $y\in \PST$, is to show the equivalent statement that for any $c\in \mathbb{R}^E$, there is a spanning tree $T \subseteq E$ in $G$ such that $c(T) \leq c(y)$.
Indeed, point~\ref{item:xiIsInPST} implies that, for any $i\in \{0,\ldots, k\}$, there is a spanning tree $T_i$ in $G[B_{i+1}\setminus B_i]$ with $c(T_i)\leq c(x^i)$. Together with~\ref{item:BPartition} and~\ref{item:oneEdgesConnect}, this implies that $T\coloneqq \left(\bigcup_{i=0}^k T_i\right) \cup \left(\bigcup_{i=1}^k \{\{v_i,u_i\}\} \right)$ is a spanning tree in $G$ with $c(T)\leq c(y)$, as desired.

Hence, $y$ satisfies all constraints of~\eqref{eq:PHKasDBST} and thus $y\in \PHK$.
\end{proof}

\begin{lemma}\label{lem:yIsBGood}
$y$ is $\mathcal{B}$-good.
\end{lemma}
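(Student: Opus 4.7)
The plan is to fix an arbitrary $B \in \mathcal{B}$ and classify it by its relation to the chain $B_0 = \emptyset \subsetneq B_1 \subsetneq \cdots \subsetneq B_k \subsetneq B_{k+1} = V$ extracted from the shortest path. Writing $L_i := B_{i+1} \setminus B_i$, the first observation I would use throughout is that each $x^i$ has support in $E[L_i]$, and that by construction $v_m \in L_{m-1}$ and $u_m \in L_m$ for every $m$. Together these two facts control exactly which terms in the definition $y = \sum_i x^i + \sum_m \chi^{\{v_m, u_m\}}$ can contribute to $y(\delta(B))$, depending on which layers $B$ separates.

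In the first case, $B = B_j$ for some $j \in \{1,\ldots,k\}$. I would argue that no $x^i$ contributes to $y(\delta(B_j))$, because the layer $L_i$ lies entirely inside $B_j$ (if $i<j$) or entirely outside (if $i \geq j$), so no edge of $E[L_i]$ crosses $B_j$; and among the chain edges $\{v_m,u_m\}$ only $m = j$ crosses $B_j$. This yields $y(\delta(B_j)) = 1$ with $y$ integral on $\delta(B_j)$, i.e.\ condition~\ref{item:BGoodSmall}. In the second case, $B_i \subsetneq B \subsetneq B_{i+1}$ for some $i$. Then only $x^i$ contributes to $y(\delta(B))$, and among the chain edges only $\{v_i,u_i\}$ and $\{v_{i+1},u_{i+1}\}$ can; which of them cross is determined by whether $u_i \in B$ and $v_{i+1} \in B$. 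A four-way split on these two bits is the cleanest accounting: the case $u_i \in B,\ v_{i+1} \notin B$ is exactly the situation where~\ref{eq:LPa} forces $x^i(\delta(B)) \geq 3$, and the other three cases combine the standard $\PHK(L_i, u_i, v_{i+1})$ load bound (either $1$ or $2$, depending on how $\{u_i, v_{i+1}\}$ splits) with the chain-edge indicators to again yield $y(\delta(B)) \geq 3$, i.e.\ condition~\ref{item:BGoodLarge}.

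The third case, and the only genuinely interesting one, is when $B$ is incomparable with some $B_j$ in the chain. Here the LP~\ref{eq:LPa} gives me no direct handle on $B$. My plan is to invoke posimodularity of the cut function applied to $y$:
\begin{equation*}
y(\delta(B)) + y(\delta(B_j)) \;\geq\; y(\delta(B \setminus B_j)) + y(\delta(B_j \setminus B)).
\end{equation*}
Incomparability makes both $B \setminus B_j$ and $B_j \setminus B$ nonempty, and both contain neither $s$ (which lies in $B \cap B_j$) nor $t$ (which lies outside both), so each is a nontrivial non-$s$-$t$ cut and Lemma~\ref{lem:yIsInPHK} gives $y(\delta(\cdot)) \geq 2$ on each. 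Combined with $y(\delta(B_j)) = 1$ from the first case, this delivers $y(\delta(B)) \geq 3$.

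A brief verification that these three cases exhaust $\mathcal{B}$ completes the argument: if $B$ is comparable with every $B_j$, then taking $i := \max\{j : B_j \subseteq B\}$ forces $B \subseteq B_{i+1}$ by comparability, landing in case~1 or case~2. The main obstacle I anticipate is case~3, precisely because the chain property of narrow cuts exploited in previous work is unavailable here; the posimodular trick, however, replaces the chain argument with a purely combinatorial submodularity computation, which is the one place where choosing ``load $<3$'' cuts (rather than narrow cuts) pays off by giving $\geq 2$ instead of $\geq 1$ on the two ``corner'' cuts $B \setminus B_j$ and $B_j \setminus B$.
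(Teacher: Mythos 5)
Your proposal is correct and follows essentially the same route as the paper's proof: the integral $1$-cut verification for the $B_j$ themselves, the LP constraint $x^i(\delta(B))\geq 3$ plus chain-edge counting for cuts nested between consecutive $B_i$'s, and the symmetry-plus-submodularity (posimodularity) inequality combined with $y(\delta(B_j))=1$ and $y\in\PHK$ for cuts incomparable with the chain. The only differences are cosmetic (your four-way split in the nested case versus the paper's three-way split, and your explicit exhaustiveness check).
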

\begin{proof}
Note that for any $i\in \{1,\ldots, k\}$, we have by construction of $y$ that $y(\{v_i,u_i\})=1$, and $y(e)=0$ for all $e\in \delta(B_i)\setminus \{\{v_i,u_i\}\}$. Hence, the $s$-$t$ cuts in $\mathcal{B}$ that correspond to cuts $B_i$ for $i\in \{1,\ldots, k\}$ all satisfy point~\ref{item:BGoodSmall} of the definition of $\mathcal{B}$-good, i.e., Definition~\ref{def:Bgood}. We show that any other cut $B\in \mathcal{B} \setminus \{B_1,\ldots, B_k\}$ satisfies point~\ref{item:BGoodLarge} of Definition~\ref{def:Bgood}.
For this we distinguish whether $\{B_1,\ldots, B_k\} \cup \{B\}$ is a chain or not.

If $\{B_1,\ldots, B_k\}\cup \{B\}$ is not a chain, then there is some $i\in \{0,\ldots, k\}$ such that neither $B\subseteq B_i$ nor $B_i \subseteq B$. This implies
\begin{align*}
y(\delta(B)) + 1 &= y(\delta(B)) + y(\delta(B_i))\\
  &\geq y(\delta(B\setminus B_i)) + y(\delta(B_i\setminus B))\\
  &\geq 4\enspace,
\end{align*}
due to the following.
The equality follows from $y(\delta(B_i))=1$ as shown at the beginning of the proof. The first inequality is a well-known property of cut functions $C \longmapsto y(\delta(C))$, following from the fact that they are symmetric and submodular. 
The second inequality is a consequence of $y\in \PHK$, the fact that $B\setminus B_i\neq \emptyset$ and $B_i\setminus B\neq \emptyset$ by choice of $B_i$, and the property that both $B\setminus B_i$ and $B_i\setminus B$ contain neither $s$ nor $t$ because $B$ and $B_i$ are $s$-$t$ cuts.

\medskip

Now assume that $\{B_1,\ldots, B_k\}\cup \{B\}$ is a chain, and hence, there is some $i\in \{0,\ldots, k\}$ such that $B_i\subsetneq B\subsetneq B_{i+1}$. Recall that $x^i$ used in the definition~\eqref{eq:defy} of $y$ is a feasible solution to~\ref{eq:LPa} for $a=((B_i,u_i),(B_{i+1}, v_{i+1}))$, which we exploit in the following.

If $u_i\in B$ and $v_{i+1}\not\in B$, then the constraints of~\ref{eq:LPa} imply $x^i(\delta(B))\geq 3$, and thus $y(\delta(B))\geq x^i(\delta(B)) \geq 3$, because $y\geq x^i$ component-wise.
If $u_i\not\in B$ and $v_{i+1}\in B$, then $\{v_i,u_i\}\in \delta(B)$, $\{v_{i+1},u_{i+1}\} \in \delta(B)$, and, moreover, $x^i(\delta(B))\geq 1$, which again implies $y(\delta(B)) \geq y(\{v_i,u_i\})+y(\{v_{i+1}, u_{i+1}\}) + x^i(\delta(B))\geq 3$. Finally, if $B$ does not separate $u_i$ and $v_{i+1}$, then $x^i(\delta(B)) \geq 2$ because $x^i\in \PHK(B_{i+1}\setminus B_i, u_i, v_{i+1})$, and, moreover, either $\{u_i,v_i\}\in \delta(B)$ (if $u_i,v_{i+1}\not\in B$), or $\{u_{i+1},v_{i+1}\}\in \delta(B)$ (if $u_{i}, v_{i+1} \in B$). In both cases we again obtain $y(\delta(B))\geq 3$, as desired.
\end{proof}

Finally, we briefly discuss the running time of the suggested procedure.

\begin{lemma}\label{lem:yRunningTimeBound}
The running time of the suggested algorithm to compute $y$ is bounded by $O(|\mathcal{B}|^2 n^2\cdot p(n,|\mathcal{B}|,\ell))$, where $p(n,k,\ell)$ is an upper bound on the running time needed to solve a linear program of type~\ref{eq:LPa} with $|\mathcal{B}| \leq k$. (We recall that $n$ is the number of vertices of $G$ and $\ell$ is the linear objective of~\ref{eq:LPa}.)
\end{lemma}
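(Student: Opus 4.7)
The plan is to bound the total work by breaking it into three parts: constructing the auxiliary graph $H$, computing the arc lengths $d(a)$, and running a shortest path algorithm on $H$, then arguing that the bottleneck is solving the LPs~\ref{eq:LPa} attached to the arcs of $A_{\HK}$.

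First I would bound $|N|$. Since $N^{\texttt{+}}$ consists of pairs $(B,u)$ with $B \in \mathcal{B}$ and $u \notin B$ (plus $(\emptyset,s)$), and analogously for $N^{\texttt{-}}$, we have $|N| = O(|\mathcal{B}|\cdot n)$. Next I would count arcs. Each arc in $A_E$ is determined by a choice of $B \in \mathcal{B}$ together with one endpoint in $B$ and one outside $B$, giving $|A_E| = O(|\mathcal{B}|\cdot n^2)$. Each arc in $A_{\HK}$ is determined by two nested cuts $B^{\texttt{+}} \subseteq B^{\texttt{-}}$ from $\mathcal{B} \cup \{\emptyset, V\}$, together with two vertices $u,v \in B^{\texttt{-}}\setminus B^{\texttt{+}}$, giving $|A_{\HK}| = O(|\mathcal{B}|^2 \cdot n^2)$. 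Thus $|A| = O(|\mathcal{B}|^2 n^2)$.

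Second, I would compute the arc lengths. For $a \in A_E$, the length $d(a) = \ell(\{v,u\})$ is obtained in constant time. For $a \in A_{\HK}$, the length $d(a)$ is the optimum of the linear program~\ref{eq:LPa}, which by definition can be solved in time $p(n,|\mathcal{B}|,\ell)$. Summed over all arcs, this contributes $O(|\mathcal{B}|^2 n^2 \cdot p(n,|\mathcal{B}|,\ell))$, which dominates the $O(|\mathcal{B}|\cdot n^2)$ work for the edges of $A_E$ since $p(n,|\mathcal{B}|,\ell) = \Omega(1)$.

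Finally, I would run a single-source shortest path computation from $(\emptyset,s)$ on the nonnegatively-weighted auxiliary graph $H$, which for instance using Dijkstra's algorithm takes $O((|N|+|A|)\log |N|) = \tilde O(|\mathcal{B}|^2 n^2)$ time, and then assemble $y$ according to~\eqref{eq:defy} by reading off the $k = O(|\mathcal{B}|)$ LP-solutions $x^i$ and edges $\{v_i, u_i\}$ along the shortest path, which is subsumed in the previous bounds. Adding these contributions yields the claimed overall running time of $O(|\mathcal{B}|^2 n^2 \cdot p(n,|\mathcal{B}|,\ell))$. There is no real obstacle here; the only mild point of care is that the LP solves for the $A_{\HK}$ arcs truly dominate the cost, which holds because $p(n,|\mathcal{B}|,\ell) \ge 1$.
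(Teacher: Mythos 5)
Your proposal is correct and follows essentially the same route as the paper: bound $|N|\leq |\mathcal{B}|\,n$, hence $|A|=O(|\mathcal{B}|^2 n^2)$, and observe that the dominant cost is solving one LP of type~\ref{eq:LPa} per arc of $A_{\HK}$. The extra details you give (the finer count of $A_E$, the Dijkstra step, and assembling $y$) are all subsumed in this bound, just as the paper implicitly assumes.
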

\begin{proof}
The bottleneck of the suggested algorithm consists of solving the linear programs~\ref{eq:LPa} for all arcs $a$ of $H$. The node set $N$ of $H$ is a subset of $\mathcal{B}\times V$. Hence, $|N|\leq |\mathcal{B}| n$ and $H$ has no more than $O(|N|^2) = O(|\mathcal{B}|^2 n^2)$ arcs. Thus, at most $O(|\mathcal{B}|^2 n^2)$ linear programs of type~\ref{eq:LPa} have to be solved, each taking time bounded by $p(n,|\mathcal{B}|,\ell)$.
\end{proof}

Theorem~\ref{thm:dynProg} is now a direct consequence of Lemma~\ref{lem:yIsMinimizer},~\ref{lem:yIsInPHK},~\ref{lem:yIsBGood}, and~\ref{lem:yRunningTimeBound}, and the fact that a linear program of type~\ref{eq:LPa} can be solved efficiently for polynomially bounded $|\mathcal{B}|$, i.e., $p(n,|\mathcal{B}|,\ell)$ can be chosen to be polynomial in $n$, $|\mathcal{B}|$, and the input size of $\ell$. Analogous to the Held-Karp relaxation,~\ref{eq:LPa} can for example be solved via the ellipsoid method, or by using an extended formulation of the Held-Karp relaxation and adding the constraints $x(\delta(B))\geq 3$ for the cuts $B\in \mathcal{B}$ satisfying the properties described in~\ref{eq:LPa}. We notice that~\ref{eq:LPa} can be solved in strongly polynomial time through known techniques (see, e.g., discussion and references in~\cite[Section 58.5]{schrijver_2003_combinatorial}), and hence, $p(n,k,\ell)$ can be chosen to be a polynomial only depending on $n$ and $k$ but not on $\ell$.\footnote{The discussion in~\cite{schrijver_2003_combinatorial} is for the Held-Karp relaxation of TSP and carries over to Path TSP.}
In particular, this implies that Algorithm~\ref{alg:mainAlg} can be implemented in strongly polynomial time, because all other steps also admit strongly polynomial time implementations.

 \section{Conclusions}

We presented a $1.5$-approximation for Path TSP, thus matching the guarantee of the classical and still unbeaten $1.5$-approximation of Christofides for TSP. Because any $\alpha$-approximation for Path TSP implies an $\alpha$-approximation for TSP, any further improvement in terms of approximation factor for Path TSP will also improve the TSP approximation ratio below $1.5$, which is a long-standing open problem in the field.
A key difference of our approach compared to previous ones is that we do not rely on narrow cuts. More precisely, we show that dynamic programming together with Karger's polynomial bound on near-minimum cuts is sufficiently versatile to deal with $s$-$t$ cuts of larger size. This allows us to avoid recursive calls to dynamic programs as done by Traub and Vygen~\cite{traub_2018_approaching}, and therefore leads to a simpler algorithm without additional error term in the approximation factor.
Finally, we highlight that our approach does not imply a new upper bound on the integrality gap of the Held-Karp relaxation for Path TSP, which is believed to be $1.5$. The currently best upper bound on the integrality gap is $1.5248$ by Traub and Vygen~\cite{traub_2018_improved}, which is derived by providing an improved analysis of an approach by Seb\H{o} and van Zuylen~\cite{sebo_2016_salesman}.
 \subsection*{Acknowledgments}

The author is grateful to Martin N\"agele, Andr\'as Seb\H{o}, and the anonymous reviewers for many helpful comments that improved the presentation of the results.
 
\bibliographystyle{plain}

\end{document}